\documentclass[11pt,a4paper]{amsart}

\usepackage{fancyhdr} %for design
\usepackage{stmaryrd} %for the \leftslice and \rightslice commands from design
\usepackage{combelow} %for the \cb{} command
\usepackage{hyperref} %for the \texorpdfstring{} command
\usepackage{titlesec} %for the titleformat command in design
\usepackage[a4paper]{geometry} %for design
\usepackage{xstring,xifthen} %for design
\usepackage{xcolor} %for the \textcolor command
\usepackage{subcaption}  % modern replacement for subfigure
\usepackage{graphicx} % for figures
\usepackage{listings} % for code listings
\usepackage{array} % for table columns manipulation
\usepackage{float} %for H placement of figures
\usepackage{tabularx} %for tables
\usepackage{multirow} %for multirow tables
\usepackage{enumitem}

\newtheorem{theorem}{Theorem}[section]

\newtheorem{lemma}[theorem]{Lemma}
\newtheorem{proposition}[theorem]{Proposition}
\theoremstyle{definition}
\newtheorem{definition}[theorem]{Definition}
\newtheorem{algorithm}[theorem]{Algorithm}

\newtheorem{remark}[theorem]{Remark}

\newcommand{\Sun}{\mathcal{S}}
\newcommand{\M}{\mathcal{M}}
\newcommand{\A}{\mathcal{A}}
\newcommand{\U}{\mathcal{U}}
\renewcommand{\d}{\mathrm{d}}
\renewcommand{\P}{\mathcal{P}}

\newcommand{\titlename}	
{Short-term solar eclipse predictions using Euclidean geometry and dynamical systems}
\newcommand{\shorttitlename}
{Short-term solar eclipse predictions using Euclidean geometry and dynamical systems}

\newcommand{\authorname}      {Alin-Doru Aluaș$^1$, Valerian-Alin Fodor$^2$ \and Virgilius-Aurelian Minuță$^3$}
\newcommand{\pdfauthorname}   {Alin-Doru Aluaș, Valerian-Alin Fodor, Virgilius-Aurelian Minuță}
\newcommand{\shortauthorname} {A. D. Aluaș, V. A. Fodor \and V. A. Minuță}

\newcommand{\universitynameA}  {$^{1,2,3}$Babeș-Bolyai University}
\newcommand{\facultynameA}     {Faculty of Mathematics and Computer Science}
\newcommand{\departmentnameA}  {Department of Mathematics}
\newcommand{\addressA}  	   {M. Kogălniceanu 1, RO-400084, Cluj-Napoca, Romania}

\newcommand{\emailaddressA}    {alinaluasdoru@gmail.com}
\newcommand{\emailaddressB}    {valerian.fodor@ubbcluj.ro}
\newcommand{\emailaddressC}    {virgilius.minuta@ubbcluj.ro}

\newcommand{\articleabstract}{We propose a computationally inexpensive algorithm to determine whether an opaque spherical body eclipses a luminous one, using only classical geometrical tools. We apply this method to the Sun-Earth-Moon system and predict solar eclipses, including their type, beginning, and duration, using the three-body problem. For predictions over a one-year period, the resulting errors are in the order of minutes.}

\newcommand{\msc}{51P05, 85A04, 70F07.}
\newcommand{\keywordterms}{Eclipse predictions, Three-body problem, Solar eclipses, Cone-sphere intersection.}

\def\depA{\departmentnameA}
\StrLen{\depA}[\depAlen]
\def\facA{\facultynameA}
\StrLen{\facA}[\facAlen]

\newcommand{\institutionA}{
\universitynameA\\
\ifthenelse{\facAlen>0}{\facultynameA\\}{}
\ifthenelse{\depAlen>0}{\departmentnameA\\}{}
\addressA}
\hypersetup{colorlinks = true, linkcolor = blue, anchorcolor =red, citecolor = blue, filecolor = red, urlcolor = blue, pdfauthor=\pdfauthorname, pdftitle=\titlename, pdfkeywords=\keywordterms, pdfsubject=\articleabstract}
\titleformat{\section}{\Large\bfseries}{\thesection}{1em}{}
\titleformat{\subsection}{\large\it}{\thesubsection}{1em}{}
\fancypagestyle{mypagestyle}{
  \fancyhf{}% Clear header/footer
  \fancyhead[OC]{\textit{\shorttitlename}}
  \fancyhead[EC]{\textit{\shortauthorname}}
  \fancyfoot[C]{\medskip $\leftslice\,$\thepage$\,\rightslice$ }
  
}
\fancypagestyle{firstpagestyle}{
  \fancyhf{}% Clear header/footer
  \fancyfoot[C]{\medskip $\leftslice\,$\thepage$\,\rightslice$ }
  
}
\title[\shorttitlename]{\LARGE{\titlename}}
\author[\shortauthorname]{\large{\authorname}
\medskip\\
{\footnotesize \institutionA\medskip\\
$\begin{array}{l}
\text{email}^1\text{: \texttt{\href{mailto:\emailaddressA}{\emailaddressA}}}\\
\text{email}^2\text{: \texttt{\href{mailto:\emailaddressB}{\emailaddressB}}}\\
\text{email}^3\text{: \texttt{\href{mailto:\emailaddressC}{\emailaddressC}}}
\end{array}$
}}
\begin{document}
\begin{abstract}
\articleabstract\\[0.1cm]
\textsc{MSC 2010.} \msc\\[0.1cm]
\textsc{Keywords.} \keywordterms
\end{abstract}
\begingroup
\def\uppercasenonmath#1{} %disables uppercase of title
\let\MakeUppercase\relax %disables uppercase of authors
\maketitle
\endgroup
\thispagestyle{firstpagestyle}

%%%%%%%%%%%%%%%%%%%%%%%%%%%%%%%%%%%%%%%%%%%%%%%%%%%%%%%%%%%%%%%
%%%%%%%%%%%%%%%%%%%%Start of document%%%%%%%%%%%%%%%%%%%%%%%%%%
%%%%%%%%%%%%%%%%%%%%%%%%%%%%%%%%%%%%%%%%%%%%%%%%%%%%%%%%%%%%%%%
\section{Introduction} \label{sec:introduction}

In order to detect whether an opaque spherical body eclipses a luminous one, modern methods involve the use of general relativity and complex celestial mechanics. In this article, we investigate whether a simple classical geometry approach would suffice for the prediction of eclipses on a short time-scale.

In order to realize that, at the core of our paper, we analyze the intersection between a cone and a sphere (Section \ref{subsec:double_cone}), by providing two alternative methods of determining whether a double cone intersects a given sphere. The former method is based on the characterization of the double cone (Theorem \ref{th:double_cone}) and requires solving a Lagrangian system \eqref{eq:Lagrange_system}, while the latter (Algorithm \ref{algo:intersection_sphere_double_cone}) is of a geometrical nature with less expensive computations, henceforth all following algorithms (for the intersection between a sphere and a convex circular cone, a circular cone and a frustum of a convex circular cone, respectively) will have Algorithm \ref{algo:intersection_sphere_double_cone} at their root. 

Section \ref{sec:geometry} continues by delving into the geometry of eclipses and establishes formulas for obtaining the parameters of the so-called eclipse cones (Remark \ref{remark:eclipse_parameters} and Remark \ref{remark:total_vs_annular_eclipse}).

While Section \ref{sec:geometry} considers a static picture of celestial bodies (i.e. at the root of all formulas and algorithms of Section \ref{sec:geometry} lies the position of said  celestial bodies), Section \ref{sec:3bps} also takes into account their motion via the three-body problem. Hence, taking as a particular case study, the Sun-Earth-Moon system, with initial parameters obtained from the \emph{NASA Jet Propulsion Laboratory (JPL) HORIZONS On-Line Ephemeris System} \cite{JPL:HORIZONS} at an initial epoch of $$t_0 =\text{ 1 January 2026, 00:00 UTC;}$$ and by numerically approximating the solutions of this three-body problem using the computer algebra system Maple, we were able to simulate the trajectories of our celestial bodies in question, for a period of one year with a low-error (less then $0.00065$ AU in drift, at the end of the simulation period, compared with the more complex simulated data from NASA for the same period).

\goodbreak
Finally, in Section \ref{sec:results}, we give our eclipse detection algorithm (Algorithm \ref{algo:main}), which is usable in any setting, for any three bodies of known coordinates and has a low computational cost. 
Running the algorithm for the Sun-Earth-Moon System on the trajectories obtained in the previous section, minute by minute, we quickly and correctly predict the two solar eclipses of 2026: the annular solar eclipse of 17th February 2026 (with a negligible error) and the total solar eclipse of the 12th of August 2026 (with an error ranging from approximately 11 to 13 minutes). This error was measured as the resulting difference between our observations and NASA's Goddard Space Flight Center's eclipse predictions \cite{nasa}.

This difference comes from the oversimplifications we considered for our Solar System and the approximation methods utilized by us (errors which are discussed in Section \ref{subsec:error}), as the purpose of this study was not to obtain a better eclipse prediction algorithm, but rather a didactical one, consisting in checking how much would the error be if we would simplify the problem of eclipse predictions to an undergraduate level.

Noteworthy is also the fact that in the current setting, the algorithm can detect forthcoming and historic eclipses for a bigger time-span, yet as the duration from the initial epoch considered increases, the error will also increase, hence the three-body problem is not suitable for long-term predictions; but for the purpose of short-term predictions in the range of one-two years it is well-equipped. Despite this time frame limitation, note also that given exact positions of any three bodies, the developed algorithms can precisely detect if an eclipse is occurring (with negligible error, only due to possible floating point arithmetic error).

%%%%%%%%%%%%%%%%%%%%%%%%%%%%%%%%%%%%%%%%%%%%%%%%%%%%%%%%%%%%%%%
%%%%%%%%%%%%%%%%%%%%%%%%%%%%%%%%%%%%%%%%%%%%%%%%%%%%%%%%%%%%%%%
%%%%%%%%%%%%%%%%%%%%%%%%%%%%%%%%%%%%%%%%%%%%%%%%%%%%%%%%%%%%%%%
\section{The geometry of eclipses}\label{sec:geometry}

This section is dedicated to creating the geometrical framework in which we can define an eclipse, detect it and classify it. As such, throughout this section, if not stated otherwise, our framework is the $\mathbb{R}^3$ Euclidean space endowed with the usual inner product.

%%%%%%%%%%%%%%%%%%%%%%%%%%%%%%%%%%%%%%%%%%%%%%%%%%%%%%%%%%%%%%%
\subsection{Cones and double cones}
\label{subsec:double_cone}

We begin by introducing the notion of a double cone, which will be useful later on to describe the so-called eclipse cones in which any observer can experience an eclipse.

\begin{definition} \label{def:convex_circular_cone}
    Consider a point $V\in\mathbb{R}^3$, a normalized vector $\vec{a}$ and an angle measure $\theta\in(0,\pi/2]$. The \textit{convex circular cone} $\mathcal{C}$ of vertex $V$, direction vector $\vec{a}$ and half-angle $\theta$ is the geometric locus
    \[\mathcal{C}=\left\{C\in\mathbb{R}^3:\angle(\overrightarrow{VC},\vec{a})\in[0,\theta]\right\}.\]
    We denote it by $\mathcal{C}=\mathrm{C}(V,\vec{a},\theta)$. If $a$ is a line passing through $V$ with direction vector $\vec{a}$, then $a$ is the \textit{symmetry axis} of $\mathcal{C}$. If we section $\mathcal{C}$ with a plane $\pi$ orthogonal to $a$, consider $\{O\}=\pi\cap a$. Then plane $\pi$ divides the convex circular cone $\mathcal{C}$ into a \textit{circular cone} of vertex $V$, direction vector $\vec{a}$, half-angle $\theta$ and base of center $O$; and a \textit{frustum} of $\mathcal{C}$ with base of center $O$.
\end{definition}

\begin{definition} \label{def:double_cone}
    Consider a convex circular cone $\mathcal{C}=\mathrm{C}(V,\vec{a},\theta)$ and $a$ its symmetry axis. Let $\mathcal{C}'=\mathrm{C}(V,-\vec{a},\theta)$ be the convex circular cone symmetric to $\mathcal{C}$ with respect to $V$. We call $\mathcal{C}\cup\mathcal{C}'$  the \textit{double cone} of vertex $V$, symmetry axis $a$ and half-angle $\theta$ and we denote it by 
    \[\mathrm{DC}(V,a,\theta).\]
\end{definition}

\begin{theorem} \label{th:double_cone}
    Consider a double cone $\mathcal{C}=\mathrm{DC}(V,a,\theta)$ and $\vec{a}$ a direction vector of line $a$. Then 
    \[\mathcal{C}=\left\{C\in\mathbb{R}^3:\left|\left\langle\vec{a},\overrightarrow{VC}\right\rangle\right|\geq \|\vec{a}\|\cdot \|\overrightarrow{VC}\|\cdot\cos\theta\right\}.\]
    In particular, the boundary of the double cone $\mathcal{C}$ is given by
    \[\mathrm{bd}\,\mathcal{C}=\left\{C\in\mathbb{R}^3:\left|\left\langle\vec{a},\overrightarrow{VC}\right\rangle\right|=\|\vec{a}\|\cdot \|\overrightarrow{VC}\|\cdot\cos\theta\right\}.\]
\end{theorem}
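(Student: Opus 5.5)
The plan is to unwind Definitions \ref{def:convex_circular_cone} and \ref{def:double_cone} and translate the angle condition into an inner-product inequality. Fix a direction vector $\vec{a}$ of $a$. Since both the claimed inequality and the double cone are unchanged when $\vec{a}$ is replaced by $\lambda\vec{a}$ with $\lambda\neq 0$, we may assume $\|\vec{a}\|=1$. This also shows that the choice of sign of $\vec{a}$ does not matter: $\mathrm{DC}(V,a,\theta)=\mathrm{C}(V,\vec{a},\theta)\cup\mathrm{C}(V,-\vec{a},\theta)$ is symmetric in $\pm\vec{a}$. The vertex $C=V$ is handled separately. There $\overrightarrow{VC}=\vec{0}$, so both sides of the inequality vanish and $V$ belongs to the right-hand set. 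For $V$ to lie in the cone we adopt the standard convention that the vertex belongs to $\mathcal{C}$, since the angle with the zero vector is undefined.

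For $C\neq V$, write $\varphi=\angle(\overrightarrow{VC},\vec{a})\in[0,\pi]$, so that $\langle\vec{a},\overrightarrow{VC}\rangle=\|\overrightarrow{VC}\|\cos\varphi$. Then $C\in\mathrm{C}(V,\vec{a},\theta)$ means $\varphi\in[0,\theta]$. Because $\cos$ is strictly decreasing on $[0,\pi]$, this is equivalent to $\cos\varphi\geq\cos\theta$. Similarly, $\angle(\overrightarrow{VC},-\vec{a})=\pi-\varphi$, so $C\in\mathrm{C}(V,-\vec{a},\theta)$ is equivalent to $-\cos\varphi\geq\cos\theta$. Taking the union, $C\in\mathcal{C}$ if and only if $\cos\varphi\geq\cos\theta$ or $-\cos\varphi\geq\cos\theta$, that is, $|\cos\varphi|\geq\cos\theta$. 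Multiplying by $\|\overrightarrow{VC}\|>0$ gives exactly the stated inequality. Here $\cos\theta\geq 0$, because $\theta\in(0,\pi/2]$, so no sign issues arise in combining the two cases into an absolute value.

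For the boundary, consider $f(C)=|\langle\vec{a},\overrightarrow{VC}\rangle|-\|\overrightarrow{VC}\|\cos\theta$, which is continuous, so $\mathcal{C}=\{f\geq0\}$ is closed. Since $f$ is continuous, $\{f>0\}$ is open and contained in the interior, so every boundary point satisfies $f=0$. For the converse, take $C$ with $f(C)=0$. If $C\neq V$, I would exhibit nearby points with $f<0$, so that $C$ is not interior. Perturb $C$ in a direction $\vec{n}$ orthogonal to $\vec{a}$ in the plane spanned by $\vec{a}$ and $\overrightarrow{VC}$, pointing away from the axis (or any orthogonal direction if $C$ is on the axis). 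Along $C+t\vec{n}$ with $t>0$, the axial component stays fixed while $\|\overrightarrow{VC}\|$ strictly increases, so $f$ becomes negative. For $C=V$, points on a ray perpendicular to $\vec{a}$ give $f=-\|\overrightarrow{VC}\|\cos\theta<0$ when $\theta<\pi/2$.

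The main obstacle is the degenerate case $\theta=\pi/2$. There $\cos\theta=0$, the condition holds for every point, the double cone is all of $\mathbb{R}^3$, and its topological boundary is empty, whereas the right-hand set $\{\langle\vec{a},\overrightarrow{VC}\rangle=0\}$ is a plane. I would therefore either restrict the boundary statement to $\theta<\pi/2$, or interpret ``boundary'' as the lateral surface $\{\angle=\theta\}$, which for $\theta=\pi/2$ is exactly that plane. Under the second interpretation the boundary claim follows directly from the equivalence above with equality in place of inequality, with no topology needed.
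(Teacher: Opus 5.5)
Your proof of the set equality is the paper's argument. You rewrite membership in $\mathrm{DC}(V,a,\theta)$ as $\angle(\vec a,\overrightarrow{VC})\in[0,\theta]\cup[\pi-\theta,\pi]$, i.e.\ $|\cos\angle(\vec a,\overrightarrow{VC})|\geq\cos\theta$, and then multiply by $\|\vec a\|\,\|\overrightarrow{VC}\|$. Normalizing $\vec a$ and splitting into the two half-cones $\mathrm{C}(V,\pm\vec a,\theta)$ is only a cosmetic difference.

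You go beyond the paper on the points it leaves unargued:
\begin{itemize}
\item You handle the vertex, where the angle in Definition~\ref{def:convex_circular_cone} is undefined and the paper is silent.
\item You actually prove the boundary claim. The paper only obtains it ``in particular'' by replacing $\geq$ with $=$, which tacitly reads $\mathrm{bd}\,\mathcal{C}$ as the lateral surface $\{\angle=\theta\}\cup\{\angle=\pi-\theta\}$. Your continuity argument, together with the outward perturbation orthogonal to $\vec a$, shows that for $\theta<\pi/2$ this set really is the topological boundary.
\end{itemize}

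Your caveat about $\theta=\pi/2$ is correct, and it is a genuine defect of the statement as written, since Definition~\ref{def:convex_circular_cone} allows $\theta=\pi/2$. In that case $\mathcal{C}=\mathbb{R}^3$ has empty boundary, while the right-hand side is the plane through $V$ orthogonal to $a$. This does not affect the paper's application, because the eclipse half-angles $\arcsin\bigl((R_S\pm R_M)/SM\bigr)$ are tiny.

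One small correction to where you place the sign condition. The step from ``$\cos\varphi\geq\cos\theta$ or $-\cos\varphi\geq\cos\theta$'' to $|\cos\varphi|\geq\cos\theta$ holds for any real right-hand side, so it needs no sign condition. The place where $\cos\theta>0$ is genuinely needed is your perturbation step for $C\neq V$. When $\cos\theta=0$ the function $f$ does not depend on $\|\overrightarrow{VC}\|$, so moving away from the axis leaves $f=0$. Your final paragraph already covers exactly this case.
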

\begin{proof}
    By definition, a double cone $\mathcal{C}=\mathrm{DC}(V,a,\theta)$ is the geometric locus of all points $C$ for which the angle made by the direction vector $\vec{a}$ and by the vector $\overrightarrow{VC}$ belongs to the interval $[0,\theta]$ or $[\pi-\theta,\pi]$, i.e.
    \[\mathcal{C}=\left\{C\in\mathbb{R}^3:\angle(\vec{a},\overrightarrow{VC})\in[0,\theta]\cup[\pi-\theta,\pi]\right\}.\]

    For every point $C\in\mathbb{R}^3$, we know that
    \[\left|\left\langle\vec{a},\overrightarrow{VC}\right\rangle\right|=\|\vec{a}\|\cdot \|\overrightarrow{VC}\|\cdot \left|\cos\left(\angle(\vec{a},\overrightarrow{VC})\right)\right|.\]

    Therefore, $C\in\mathcal{C}$ if and only if 
    \[\left|\cos\left(\angle(\vec{a},\overrightarrow{VC})\right)\right|\geq \cos\theta,\]
    and as $\theta\in(0,\pi/2]$ and $\angle(\vec{a},\overrightarrow{VC})\in [0,\pi]$, it is clear that $C\in\mathcal{C}$ if and only if \[\angle(\vec{a},\overrightarrow{VC})\in[0,\theta]\cup[\pi-\theta,\pi].\qedhere\]
\end{proof}

The next question we want to answer about double cones is how to determine whether a double cone and a sphere intersect. Consider a spherical body $\P$ of center $P(x_P,y_P,z_P)$ and radius $R_P$ and a double cone $\mathcal{C}=\mathrm{DC}(V,a,\theta)$. Let $\vec{a}(x_a,y_a,z_a)$ be a direction vector of $a$. One approach to detect if $\P$ and $\mathcal{C}$ intersect is the following.

We define the distance between $\mathcal{C}$ and $\P$ in the following manner:
$$
\mathrm{dist}(\P,\mathcal{C}) = \mathrm{dist}(P,\mathrm{bd}\,\mathcal{C}) - R_P 
$$
where $\mathrm{dist}(P,\mathrm{bd}\,\mathcal{C})$ represents the distance between the point $P$ and the boundary of the double cone $\mathcal{C}$, i.e., 
$$
\mathrm{dist}(P,\mathcal{C}) =  \inf_{C \in \mathrm{bd}\mathcal{C}} \|P - C\|.
$$

In order to compute $\mathrm{dist}(P,\mathcal{C})$, we will minimize the following square distance function:
$$
f(x = (x_1,x_2,x_3)) = (x_1-x_P)^2+(x_2-y_P)^2+(x_3-z_P)^2$$ with respect to $\mathrm{bd}\,\mathcal{C}.$ Observe that $f$ is indefinitely differentiable and strictly convex. We denote by $v$ the vector $v = (x_V,y_V,z_V)$, and consider the function $F:\mathbb{R}^3 \to \mathbb{R}$, defined by
$$
F(x) = \langle \vec{a}, x-v \rangle ^2 - \|\vec{a}\|^2 \cdot \|x-v\|^2 \cdot \cos^2\theta.
$$

Observing that \[\mathrm{bd}\,\mathcal{C}=\left\{C\in\mathbb{R}^3: \langle\vec{a},\overrightarrow{VC}\rangle^2=\|\vec{a}\|^2\cdot \|\overrightarrow{VC}\| ^2\cdot\cos^2\theta\right\},\]  we deduce that, minimizing the function $f$ with respect to $\mathrm{bd}\,\mathcal{C}$, is equivalent to minimizing the function $f$ relative to the set $M = \{x\in \mathbb{R}^3 : F(x) = 0\}$. Therefore, we will use Lagrange multipliers method. The associated Lagrangian function is
$$
\mathcal{L}(x,\lambda) = f(x) + \lambda F(x),
$$
with the associated Lagrangian system
\begin{equation}\label{eq:Lagrange_system}\resizebox{0.9\textwidth}{!}{$
\left\{
\begin{aligned}
\frac{\partial\mathcal{L}}{\partial x_1}(x,\lambda)
&= -2(x_P - x_1)
  + \lambda\Bigl[2<\vec{a},x-v>\,x_a - 2\|\vec{a}\|^2\,(x_1-x_V)\cos^2\theta\Bigr]
=0\\
\frac{\partial\mathcal{L}}{\partial x_2}(x,\lambda)
&= -2(y_P - x_2)
  + \lambda\Bigl[2<\vec{a},x-v>\,y_a - 2\|\vec{a}\|^2\,(x_2-y_V)\cos^2\theta\Bigr]
=0\\
\frac{\partial\mathcal{L}}{\partial x_3}(x,\lambda)
&= -2(z_P - x_3)
  + \lambda\Bigl[2<\vec{a},x-v>\,z_a - 2\|\vec{a}\|^2\,(x_3-z_V)\cos^2\theta\Bigr]
=0\\
\frac{\partial\mathcal{L}}{\partial \lambda}(x,\lambda)
&= F(x) = 0.
\end{aligned}
\right.$}
\end{equation}

Solving the system \eqref{eq:Lagrange_system}, will give us a  solution (not necessarily unique, since the double cone $\mathcal{C}$ is not convex) $(x^0,\lambda^0)$, from which $x^0$ minimizes the square distance function $f$, relative to $\mathrm{bd}\,\mathcal{C}$, hence 
\[\mathrm{dist}(\P,\mathcal{C})=\sqrt{f(x^0)}-R_P.\]

Note that this distance can be zero or negative. In this case the intersection of $\P$ and $\mathcal{C}$ is nonempty. Otherwise, if the distance is strictly positive, then $\P$ and $\mathcal{C}$ are disjoint.

Implementing this method has proven to be computationally expensive, therefore in the following algorithm we will provide an alternative geometric approach, which has been considerably faster when using computer assisted computations.

\begin{algorithm}[Intersection between a sphere and a double cone] \label{algo:intersection_sphere_double_cone}
    Consider a spherical body $\P$ of center $P(x_P,y_P,z_P)$ and radius $R_P$ and a double cone $\mathcal{C}=\mathrm{DC}(V,a,\theta)$. Let $P'$ be the orthogonal projection of $P$ on line $a$. In order to determine, whether $\P$ and $\mathcal{C}$ are disjoint or not, it suffices to consider the following cases.
    \begin{itemize}
        \item If $P=P'$, then $P\in\mathrm{Int}\,\mathcal{C}$ and hence $\P$ and $\mathcal{C}$ are not disjoint;
        \item Else, the line $PP'$ will intersect $\mathcal{C}$ in two points. Let $P''$ be the one of the two for which the segment $PP''$ is of minimum length. We continue by treating the following two subcases.
        \begin{itemize}
            \item If $PP'\leq P'P''$, then $P\in\mathcal{C}$ and hence $\P$ and $\mathcal{C}$ are not disjoint;
            \item Else, $P\in\mathrm{Ext}\,\mathcal{C}$. Let $Q$ be the orthogonal projection of $P$ on line $VP''$. If $PQ\leq R_P$, then $\P$ and $\mathcal{C}$ are not disjoint; otherwise, $\P$ and $\mathcal{C}$ are disjoint.
        \end{itemize}
    \end{itemize}
\end{algorithm}
\begin{proof}
    We analyze the algorithm in two dimensions. Consider a section of our sphere and double cone, by a plane $\pi$ containing the line $a$ and the point $P$ (see Fig. \ref{fig:2D_intersection}).

    \begin{figure}[htb]
    \centering
    \includegraphics[width=0.50\textwidth]{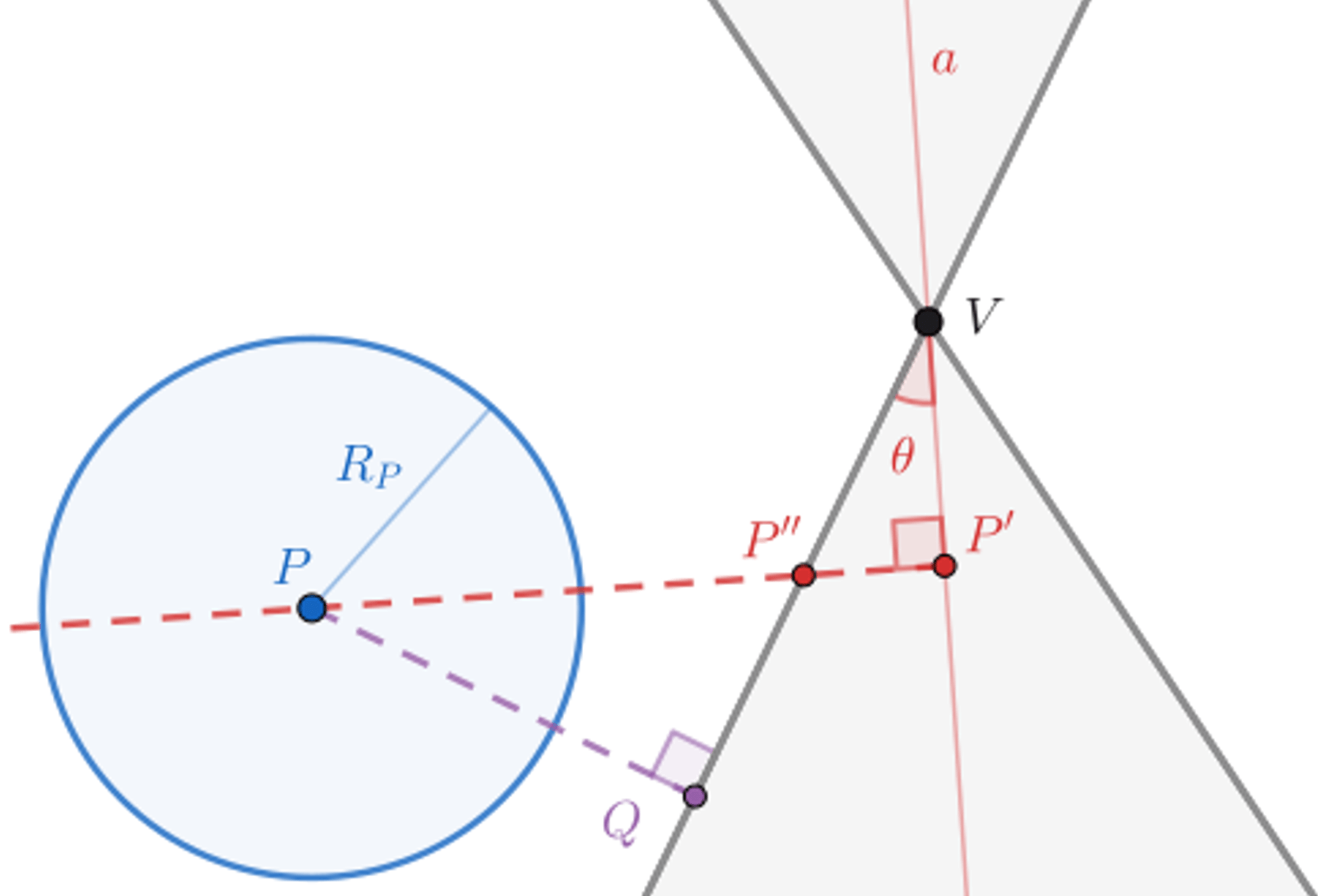}
    \caption{2D section of the double cone $\mathcal{C}$ and the sphere $\P$}
    \label{fig:2D_intersection}
    \end{figure}

    Let $\vec{a}(x_a,y_a,z_a)$ be a direction vector of line $a$.
    If $P'(x_{P'},y_{P'},z_{P'})$ is the orthogonal projection of $P(x_P,y_P,z_p)$ on the line $a$, then
    \[
    \left\{
    \begin{array}{rcl}
         x_{P'}&=&x_V-\alpha\cdot x_a \\[0.1cm]
         y_{P'}&=&y_V-\alpha\cdot y_a \\[0.1cm]
         z_{P'}&=&z_V-\alpha\cdot z_a, 
    \end{array}
    \right.
    \]
    where 
    \[\alpha=\frac{\left\langle\overrightarrow{VP},\vec{a}\right\rangle}{\|\vec{a}\|^2}.\]

    If $P=P'$, then $P\in a \subseteq \mathrm{Int}\,\mathcal{C}$ hence $P\in \P\cap\mathcal{C}$, therefore $\P$ and $\mathcal{C}$ are not disjoint.

    Assume that $P\neq P'$. The line $PP'$ will intersect $\mathcal{C}$ in two points. Let $P''$ be the one of the two for which the segment $PP''$ is of minimum length. 

    Given the right triangle $VP'P''$, we know that $\angle P''VP'=\theta$ hence we obtain the lengths
    \[P'P''=VP'\cdot \tan\theta\qquad\text{and}\qquad PP''=|PP'-P'P''|.\]
    
    If $PP'\leq P'P''$, then $P\in(P'P'']\subseteq\mathcal{C}$ hence $P\in\P\cap\mathcal{C}$, therefore $\P$ and $\mathcal{C}$ are not disjoint.

    Assume that $PP'> P'P''$, then $P''\in(PP')$, hence $P\in\mathrm{Ext}\,\mathcal{C}$. 
    
    Let $Q$ be the orthogonal projection of $P$ on line $VP''$, then the triangle $PQP''$ is a right triangle similar to the right triangle $VP'P''$, hence $\angle P''PQ=\theta$ and thus we obtain the length
    \[PQ=PP''\cdot\cos\theta.\]

    If $PQ\leq R_P$, then clearly $\P$ and $\mathcal{C}$ are not disjoint. Otherwise, if $PQ > R_P$, it is clear that $\P$ and $\mathcal{C}$ are disjoint.
\end{proof}

\begin{remark}
    The proof of Algorithm \ref{algo:intersection_sphere_double_cone}, not only proves the said algorithm, but also gives the formulas needed to compute the required elements of the algorithm. All the next algorithms of the current article will be given in the same manner.
\end{remark}

Finally, we also want to provide an answer to the question of when does a sphere intersect a convex circular cone, a circular cone and a frustum of a convex circular cone.

\begin{lemma} \label{lemma:same_semiplan}
    Let $A(x_A,y_A,z_A)$ and $H(x_H,y_H,z_H)$ be two points of the $\mathbb{R}^3$ Euclidean space. Consider $\mathcal{H}$ to be the plane orthogonal to the line $AH$, passing through the point $H$. The plane $\mathcal{H}$ determines two semispaces. An arbitrary point $B(x_B,y_B,z_B)\in\mathbb{R}^3$ is in the same semispace with point $A$ if and only if
    \[(x_A-x_H)(x_B-x_H)+(y_A-y_H)(y_B-y_H)+(z_A-z_H)(z_B-z_H) \geq 0.\]
\end{lemma}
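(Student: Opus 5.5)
The plan is to turn the statement into a sign condition on a linear functional. The expression on the left is just the inner product $\langle \overrightarrow{HA},\overrightarrow{HB}\rangle$. The vector $\vec{n}=\overrightarrow{HA}$ is a normal vector of $\mathcal{H}$, since $\mathcal{H}$ is orthogonal to $AH$. We will implicitly assume $A\neq H$, as otherwise the plane is not defined. Then
\[\mathcal{H}=\{X\in\mathbb{R}^3:\langle \vec{n},\overrightarrow{HX}\rangle=0\},\]
and the two (closed) semispaces determined by $\mathcal{H}$ are
\[\mathcal{H}^+=\{X:\langle \vec{n},\overrightarrow{HX}\rangle\geq 0\}\quad\text{and}\quad\mathcal{H}^-=\{X:\langle \vec{n},\overrightarrow{HX}\rangle\leq 0\}.\]
To justify this description, decompose any $\overrightarrow{HX}$ orthogonally as $t\,\vec{n}+\vec{w}$ with $\vec{w}\perp\vec{n}$. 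The sign of $t=\langle\vec{n},\overrightarrow{HX}\rangle/\|\vec{n}\|^2$ tells on which side of the plane $X$ lies. Equivalently, the segment $[XY]$ meets $\mathcal{H}$ exactly when the affine function $X\mapsto\langle\vec{n},\overrightarrow{HX}\rangle$ changes sign or vanishes along it, by the intermediate value theorem applied to the affine map along the segment.

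Next, locate $A$. We have $\langle \vec{n},\overrightarrow{HA}\rangle=\|\overrightarrow{HA}\|^2>0$, so $A$ lies strictly in $\mathcal{H}^+$. Therefore $B$ is in the same semispace as $A$ if and only if $\langle\overrightarrow{HA},\overrightarrow{HB}\rangle\geq 0$. Expanding this inner product in coordinates gives exactly the stated inequality.

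The only delicate point is a convention rather than a computation. Points of $\mathcal{H}$ itself have inner product equal to $0$, so they are counted as being in the same semispace as $A$. I would state explicitly that semispaces are taken closed, which is the reading consistent with the non-strict inequality $\geq 0$, and that $A\neq H$ is assumed. Beyond that, the argument is a routine check.
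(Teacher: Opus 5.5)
Your proposal is correct and follows essentially the paper's route. The paper also reduces the claim to $\langle \overrightarrow{HA},\overrightarrow{HB}\rangle\geq 0$, via the condition $\angle AHB\leq 90^\circ$, while you get there through the sign of the linear functional with normal $\overrightarrow{HA}$; your explicit conventions (closed semispaces, $A\neq H$) make precise what the paper leaves implicit, including the case $B=H$ where the angle is undefined.
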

\begin{proof}
    The proof is straightforward, given that for $A$ and $B$ to be in the same semispace relative to the plane $\mathcal{H}$, the angle $\angle AHB$ must be less than or equal to $90^\circ$. This condition is equivalent to 
    $$
    \langle \overrightarrow{HA}, \overrightarrow{HB}\rangle \geq 0.
    $$
    By applying the usual inner product rule, the conclusion follows.
\end{proof}

\begin{algorithm}[Intersection between a sphere and a convex circular cone] \label{algo:intersection_sphere_convex_cone}
Consider a spherical body $\P$ of center $P(x_P,y_P,z_P)$ and radius $R_P$, a convex circular cone $\mathcal{C}=\mathrm{C}(V,\vec{a},\theta)$, and $a$ to be the symmetry axis of $\mathcal{C}$. Let $\mathcal{DC}=\mathrm{DC}(V,a,\theta)$ be the double cone composed out of $\mathcal{C}$ and its symmetric with respect to its vertex $V$. We apply Algorithm \ref{algo:intersection_sphere_double_cone} to determine whether $\P$ intersects $\mathcal{DC}$:
\begin{itemize}
    \item If $\P$ and $\mathcal{DC}$ are disjoint, then it is clear that $\P$ and $\mathcal{C}$ are disjoint;
    \item If their intersection is nonempty, then we check the length of the segment $VP$:
    \begin{itemize}
        \item If $VP\leq R_P$, then it is clear that the sphere $\P$ intersects both convex cones forming $\mathcal{DC}$, therefore the intersection between $\P$ and $\mathcal{C}$ is nonempty;
        \item If $VP > R_P$, then the sphere $\P$ intersects only one of the two convex cones forming $\mathcal{DC}$. Consider $A$ to be point on line $a$ such that $\overrightarrow{VA}=\vec{a}$. By Lemma \ref{lemma:same_semiplan}, we check whether $A$ and $P$ are in the same semispace with respect to a plane passing through $V$, orthogonal to $a$, in which case the intersection between $\P$ and $\mathcal{C}$ is nonempty; otherwise they are disjoint.
    \end{itemize}
\end{itemize}  
\end{algorithm}
\begin{proof}
    The proof is straightforward.
\end{proof}

\begin{algorithm}[Intersection between a sphere and a frustum of a circular cone]\label{algo:intersection_sphere_frustum}
    Consider a spherical body $\P$ of center $P(x_P,y_P,z_P)$ and radius $R_P$, a convex circular cone $\mathcal{C}=\mathrm{C}(V,\vec{a},\theta)$, $a$ to be the symmetry axis of $\mathcal{C}$, and the point $O(x_O,y_O,z_O)\in a\cap \mathcal{C}$. 
    
    Let $\pi$ be the plane orthogonal to $a$ passing through $O$. Then plane $\pi$ will divide the convex circular cone $\mathcal{C}$ into a circular cone, denoted by $\mathcal{L}$; and a frustum of $\mathcal{C}$, denoted by $\mathcal{F}$. Note that both $\mathcal{L}$ and $\mathcal{F}$ have point $O$ as the center of their base. 
    
    We want to find if the sphere $\P$ intersects the frustum $\mathcal{F}$. We begin by applying Algorithm \ref{algo:intersection_sphere_convex_cone} (and implicitly Algorithm \ref{algo:intersection_sphere_double_cone}) to determine whether $\P$ intersects the convex circular cone $\mathcal{C}$. If $\P$ and $\mathcal{C}$ are disjoint, then so are $\P$ and $\mathcal{F}$.

    Assume that $\P$ and $\mathcal{C}$ are not disjoint. By Algorithm \ref{algo:intersection_sphere_double_cone} we have three cases:
    \begin{itemize}[leftmargin=*]
        \item If $P\in a$, the farthest point of $\P\cap \mathcal{C}$ from $V$ will be a point $F\in a$ such that $V,P$ and $F$ are collinear in this order, and that $PF=R_P$ (see Fig. \ref{fig:algo_frustum_sym}). We check if $VF\geq VO$, in which case $\P$ intersects $\mathcal{F}$. Otherwise $\P$ and $\mathcal{F}$ are disjoint.
        \item If $P\in\mathcal{C}$, the farthest point of $\P\cap \mathcal{C}$ from $V$ will be a point $F\in\mathrm{bd}\,\P\cap[VP$ such that $P\in [VF]$. Let $F'$ be the orthogonal projection of $F$ on line $a$  (see Fig. \ref{fig:algo_frustum_int}). We check if $VF'\geq VO$, in which case $\P$ intersects $\mathcal{F}$. Otherwise $\P$ and $\mathcal{F}$ are disjoint. 
        \item If $P\in\mathrm{Ext}\,\mathcal{C}$, then the farthest point of $\P\cap \mathcal{C}$ from $V$ will be a point $F\in \mathrm{bd}\,\P\cap \mathrm{bd}\,\mathcal{C}$ such that $VF$ is of maximum length. Let $F'$ be the orthogonal projection of $F$ on line $a$  (see Fig. \ref{fig:algo_frustum_ext}). We check if $VF'\geq VO$, in which case $\P$ intersects $\mathcal{F}$. Otherwise $\P$ and $\mathcal{F}$ are disjoint. 
    \end{itemize}
\end{algorithm}
\begin{figure}[htbp]
    \centering
    \hfill
    \begin{subfigure}[b]{0.3\textwidth}
        \centering
        \includegraphics[width=\textwidth, trim=10cm 5cm 14cm 1cm,
  clip]{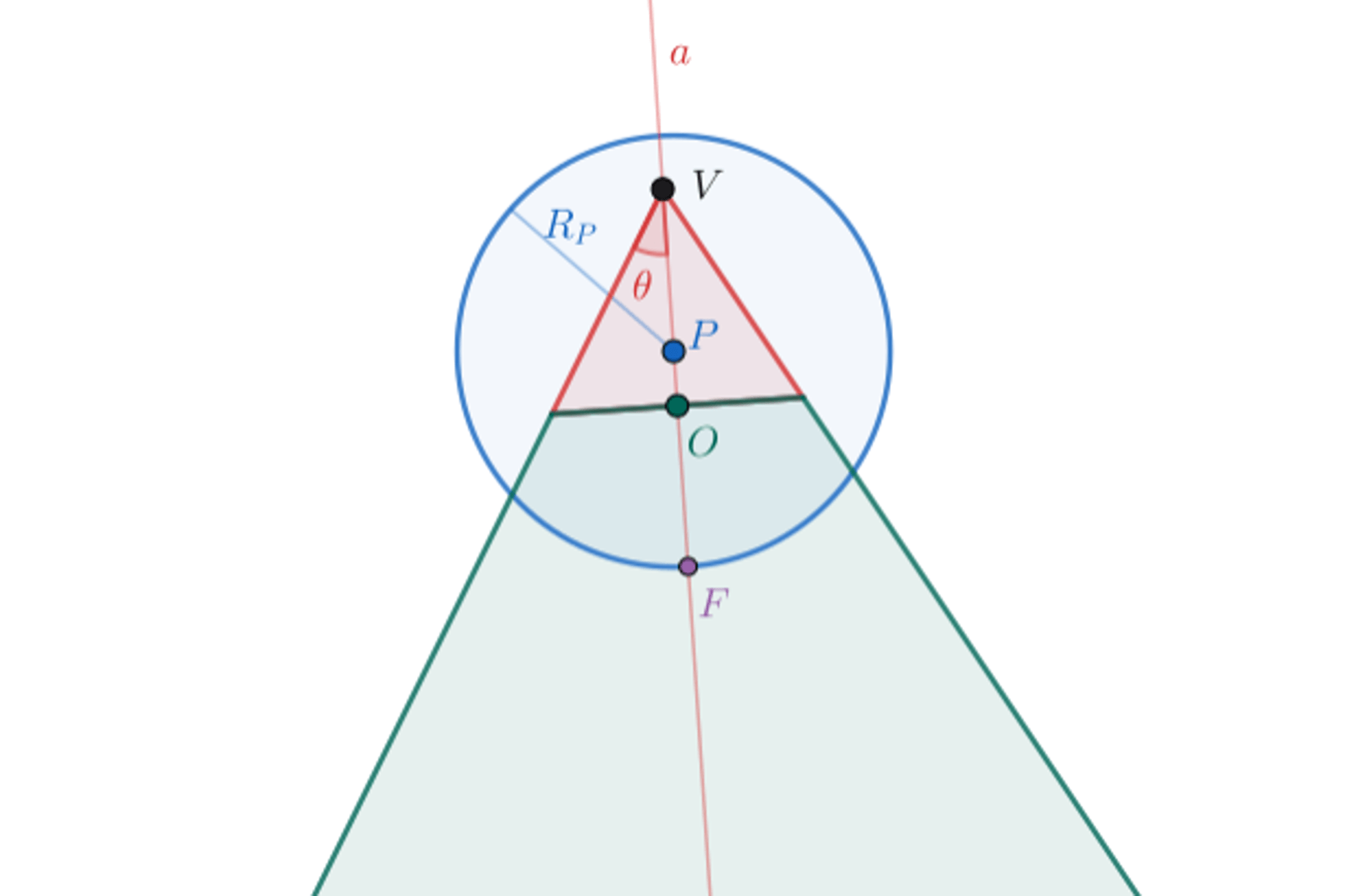}
        \caption{$P\in a$}
        \label{fig:algo_frustum_sym}
    \end{subfigure}
    \hfill
    \begin{subfigure}[b]{0.3\textwidth}
        \centering
        \includegraphics[width=\textwidth, trim=10cm 5cm 14cm 1cm,
  clip]{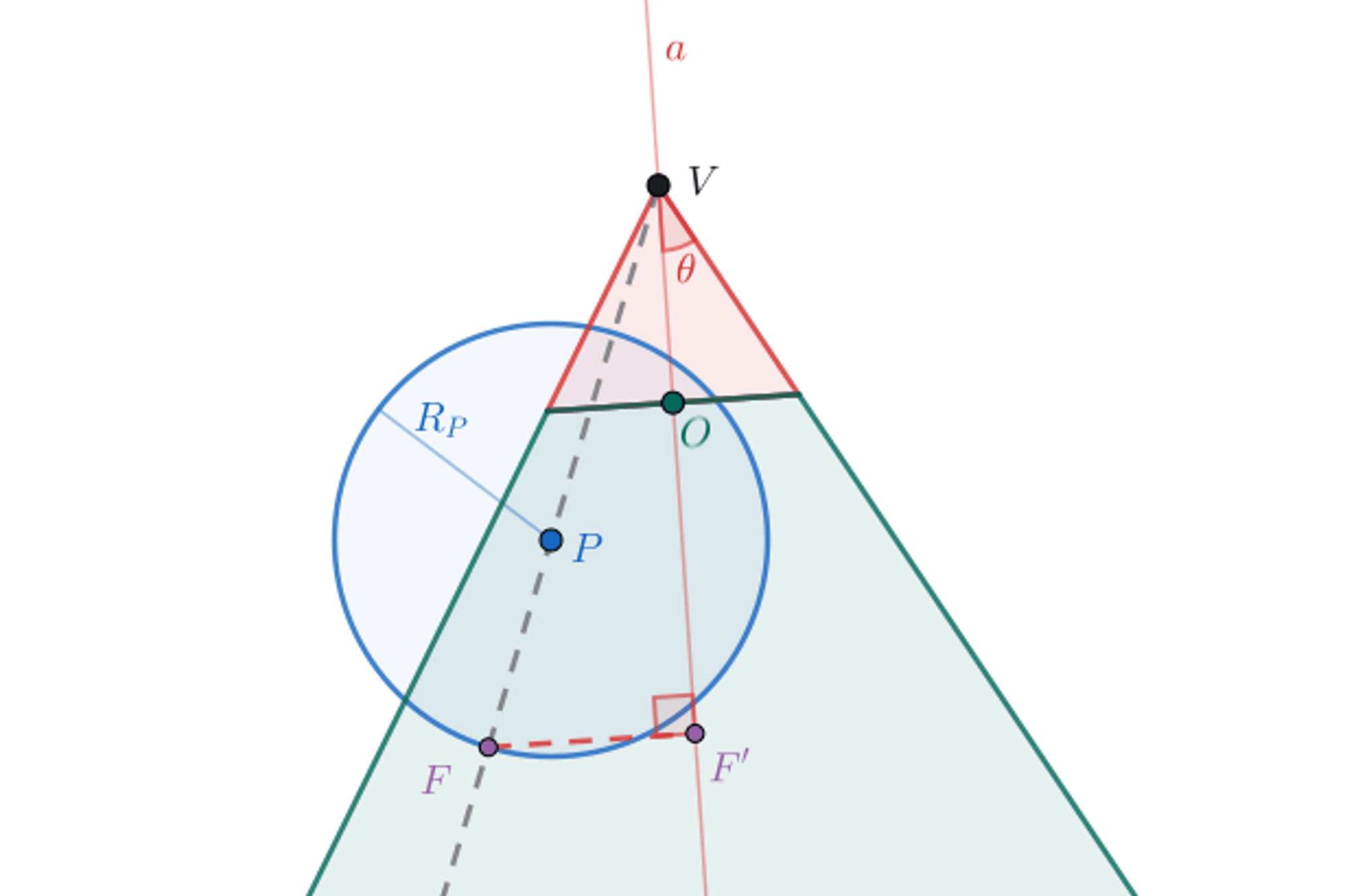}
        \caption{$P\in \mathcal{C}$}
        \label{fig:algo_frustum_int}
    \end{subfigure}
    \hfill
    \begin{subfigure}[b]{0.3\textwidth}
        \centering
        \includegraphics[width=\textwidth, trim=10cm 5cm 14cm 1cm,clip]{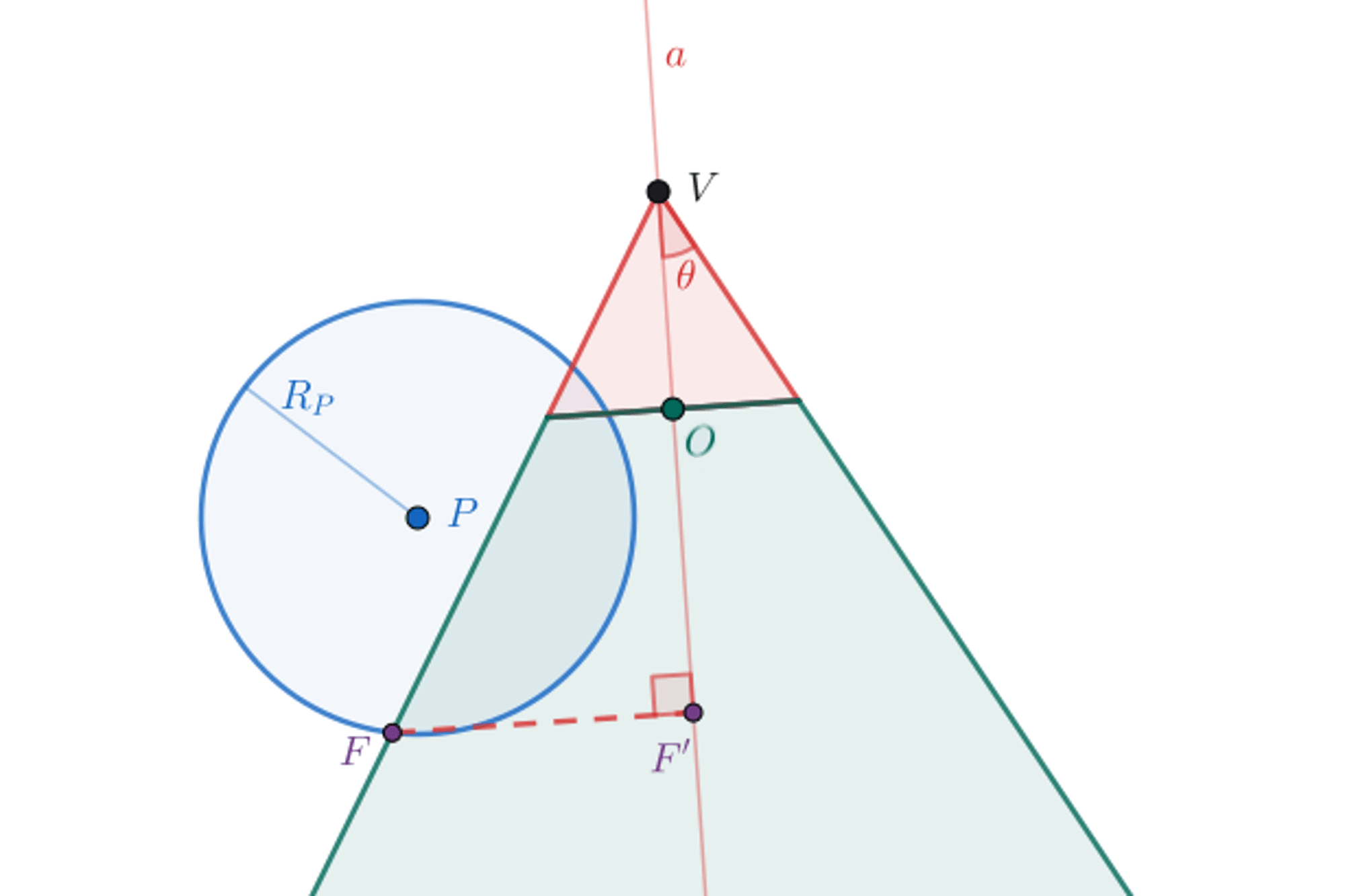}
        \caption{$P\in \mathrm{Ext}\,\mathcal{C}$}
        \label{fig:algo_frustum_ext}
    \end{subfigure}
    \hfill
    \label{fig:algorithm_frustum}
    \caption{2D section of the two bodies by a plane containing $a$ and $P$}
\end{figure}

\begin{proof}
    Assume that $\P$ and $\mathcal{F}$ are not disjoint. 
    If $P\in a$, and if $\vec{a}(x_{\vec{a}},y_{\vec{a}},z_{\vec{a}})$, then $$F(x_{V}+(PV+R_P)\cdot x_{\vec{a}},y_{V}+(PV+R_P)\cdot y_{\vec{a}},z_{V}+(PV+R_P)\cdot z_{\vec{a}}),$$
    because $\vec{a}$ is a normalized vector, collinear to $\overrightarrow{VF}$. If $VF\geq VO$, then $F\in\mathcal{F}$. Because $PF=R_P$, we also have that $F\in \P$, hence $F\in\P\cap\mathcal{F}$.

    If $P\in\mathcal{C}$, then $[VP\subseteq \mathcal{C}$ hence the farthest point of $\P\cap \mathcal{C}$ from $V$ is clearly the  point $F(x_F,y_F,z_F)\in \mathrm{bd}\,P\cap[VP$ such that $P\in [VF]$. Here,
    \[
    \left\{
    \begin{array}{rcl}
         x_F& = & x_V+\dfrac{VP+R_P}{VP}\cdot (x_P-x_V)  \\[0.3cm]
         y_F& = & y_V+\dfrac{VP+R_P}{VP}\cdot (y_P-y_V)  \\[0.3cm]
         z_F& = & z_V+\dfrac{VP+R_P}{VP}\cdot (z_P-z_V)  .
    \end{array}
    \right.
    \]
    Given that $F'(x_{F'},y_{F'},z_{F'})$ is the orthogonal projection of $F$ onto line $a$, line which is of normalized direction vector $\vec{a}$, we have that
    \[
    \left\{
    \begin{array}{rcl}
         x_{F'}& = & x_V-\left\langle\overrightarrow{VF},\vec{a}\right\rangle \cdot x_{\vec{a}} \\[0.2cm]
         y_{F'}& = & y_V- \left\langle\overrightarrow{VF},\vec{a}\right\rangle \cdot y_{\vec{a}} \\[0.2cm]
         z_{F'}& = & z_V- \left\langle\overrightarrow{VF},\vec{a}\right\rangle \cdot z_{\vec{a}} .
    \end{array}
    \right.
    \]
    If $VF'\geq VO$, then $F'\in \mathcal{F}$. Because $FF'\perp a$ and $\pi \perp a$, we get that $FF'\parallel \pi$ therefore, the fact that $F'\in \mathcal{F}$ and that $F\in[VP\subseteq \mathcal{C}$ implies $F\in \mathcal{F}$. This concludes that $F\in\P\cap\mathcal{F}$, hence $\P$ and $\mathcal{F}$ are not disjoint. Otherwise, if $VF'< VO$ we get that $F'\in \mathcal{L}$, hence, with the same reasoning, $F\in\mathcal{L}$. Because $F$ is the farthest point of $\P\cap \mathcal{C}$ from $V$, this clearly implies that $\P$ and $\mathcal{F}$ are disjoint.

    If $P\in\mathrm{Ext}\,\mathcal{C}$, then the farthest point of $\P\cap \mathcal{C}$ from $V$ will be a point $F\in \mathrm{bd}\,\P\cap \mathrm{bd}\,\mathcal{C}$ such that $VF$ is of maximum length. More precisely, if we consider a plane $\pi'$ that contains the point $P$ and the line $a$, then $\mathrm{bd}\,\P\cap \mathrm{bd}\,\mathcal{C}\cap \pi'$ contains two points (see Fig. \ref{fig:algo_frustum_ext}) - one closer to $V$ and one farther. Out of these two, $F$ will be the latter point. 

    \goodbreak
    The argument proceeds further as in the previous case, but with the mention that in this case we need not compute the coordinates of $F$ and $F'$. Instead, we shall make use of the distance $PQ$ computed by Algorithm \ref{algo:intersection_sphere_double_cone}, where $Q$ is the orthogonal projection of the point $P$ onto the nearest generatrix of the convex circular cone $\mathcal{C}$ to $\P$ (see Fig. \ref{fig:2D_intersection}). Applying the Pythagorean  theorem to the right triangles $VPQ$ and $PQF$ we get 
    \[VQ=\sqrt{PV^2-PQ^2}\quad\text{and}\quad QF=\sqrt{R_P^2-PQ^2},\]
    therefore
    \[VF=VQ+QF=\sqrt{PV^2-PQ^2}+\sqrt{R_P^2-PQ^2}.\]
    Finally, note that in the right triangle $VFF'$, $\angle FVF'=\theta$, hence \[VF'=VF\cdot\cos\theta.\qedhere\]
\end{proof}

\begin{algorithm}[Intersection between a sphere and a circular cone]\label{algo:intersection_sphere_cone}
    We work in the same setting as that of Algorithm \ref{algo:intersection_sphere_frustum}. We want to find if the sphere $\P$ intersects the circular cone $\mathcal{L}$. 
    
    We begin by applying Algorithm \ref{algo:intersection_sphere_convex_cone} (and implicitly Algorithm \ref{algo:intersection_sphere_double_cone}) to determine whether $\P$ intersects $\mathcal{C}$. If $\P$ and $\mathcal{C}$ are disjoint, then so are $\P$ and $\mathcal{L}$.
    
    Assume that $\P$ and $\mathcal{C}$ are not disjoint.
    We continue by computing the distance $PV$. If $PV \leq R_P$, then it is clear that $\P$ intersects the circular cone $\mathcal{L}$ (see Fig. \ref{fig:algo_cone_ini}). 
    Otherwise, by Algorithm \ref{algo:intersection_sphere_double_cone} we have three cases:
    \begin{itemize}
        \item If $P\in a$, the nearest point of $\P\cap \mathcal{C}$ to $V$ (i.e., the metric projection of $V$ onto the compact, convex set $\P\cap \mathcal{C}$) will be a point $N\in a$ such that $V,N$ and $P$ are collinear in this order, and that $PN=R_P$ (see Fig. \ref{fig:algo_cone_sym}). We check if $VN\leq VO$, in which case $\P$ intersects $\mathcal{L}$. Otherwise $\P$ and $\mathcal{L}$ are disjoint.
        
        \item If $P\in\mathcal{C}$, the nearest point of $\P\cap \mathcal{C}$ to $V$ will be a point $N\in\mathrm{bd}\,\P\cap[VP$ such that $N\in [VP]$. Let $N'$ be the orthogonal projection of $N$ on line $a$  (see Fig. \ref{fig:algo_cone_int}). We check if $VN'\leq VO$, in which case $\P$ intersects $\mathcal{L}$. Otherwise $\P$ and $\mathcal{L}$ are disjoint. 
        
        \item If $P\in\mathrm{Ext}\,\mathcal{C}$, then the nearest point of $\P\cap \mathcal{C}$ to $V$ will be a point $N\in \mathrm{bd}\,\P\cap \mathrm{bd}\,\mathcal{C}$ such that $VN$ is of minimum length. Let $N'$ be the orthogonal projection of $N$ on line $a$  (see Fig. \ref{fig:algo_cone_ext}). We check if $VN'\leq VO$, in which case $\P$ intersects $\mathcal{L}$. Otherwise $\P$ and $\mathcal{L}$ are disjoint. 
    \end{itemize}
\end{algorithm}

\begin{proof}
    Assume that the intersection of $\P$ and $\mathcal{C}$ is nonempty. 
    The comparison between the distance $PV$ and $R_P$ is necessary, in order to provide a similar algorithm to Algorithm \ref{algo:intersection_sphere_frustum}.
    Namely, if $PV\leq R_P$, then it is clear that $V\in \P$, hence $\P$ intersects $\mathcal{L}$ because $V\in\mathcal{L}$; otherwise, if we assume that $PV > R_P$, then we can compute the nearest point between $\P\cap \mathcal{C}$ to $V$ in a similar way to Algorithm \ref{algo:intersection_sphere_frustum}. Therefore from this point forward the proof is analogous to that of said algorithm, with the only differences lying in the computations of the main geometrical elements required. 
\end{proof}

\begin{figure}[hbtp]
    \centering
    \begin{subfigure}[b]{0.4\textwidth}
        \centering
        \includegraphics[width=\textwidth]{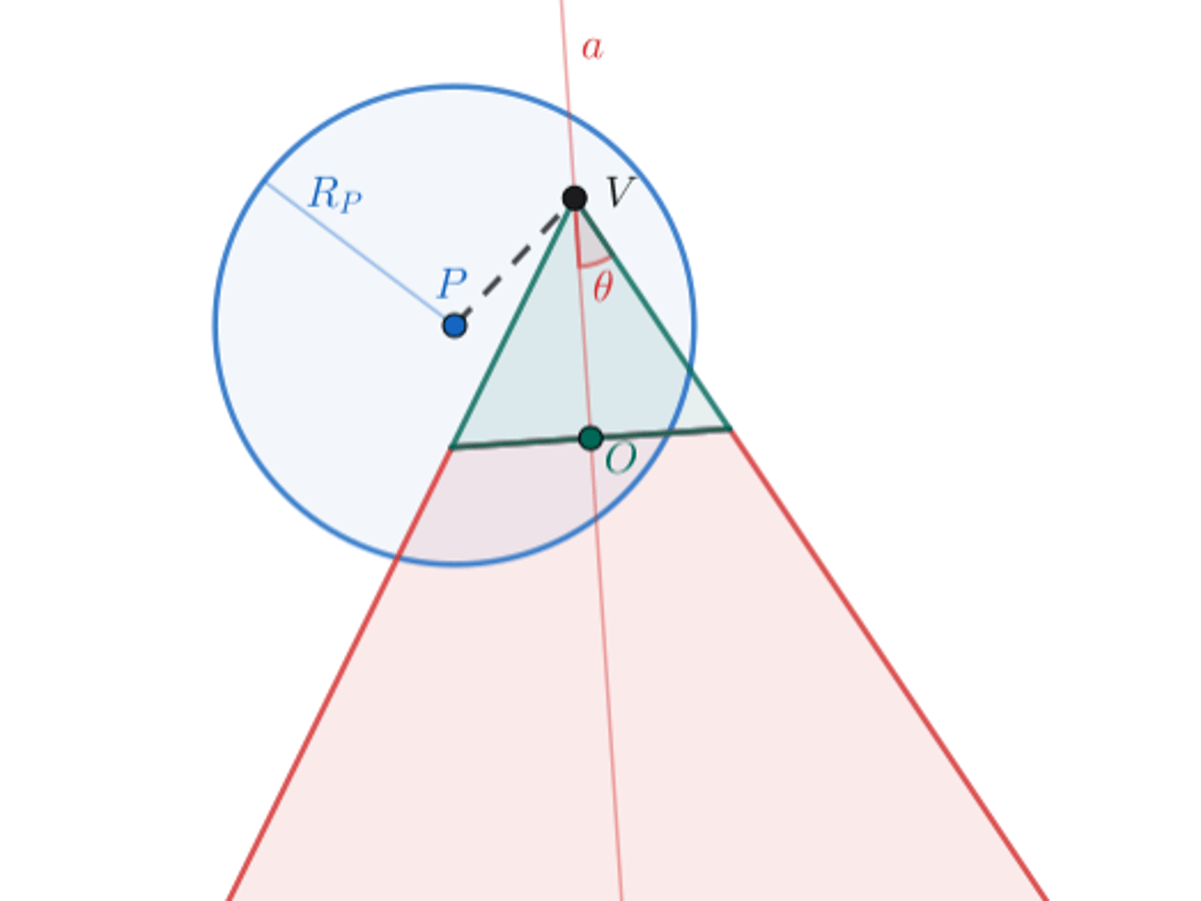}
        \caption{$PV\leq R_P$}
        \label{fig:algo_cone_ini}
    \end{subfigure}
    \begin{subfigure}[b]{0.4\textwidth}
        \centering
        \includegraphics[width=\textwidth]{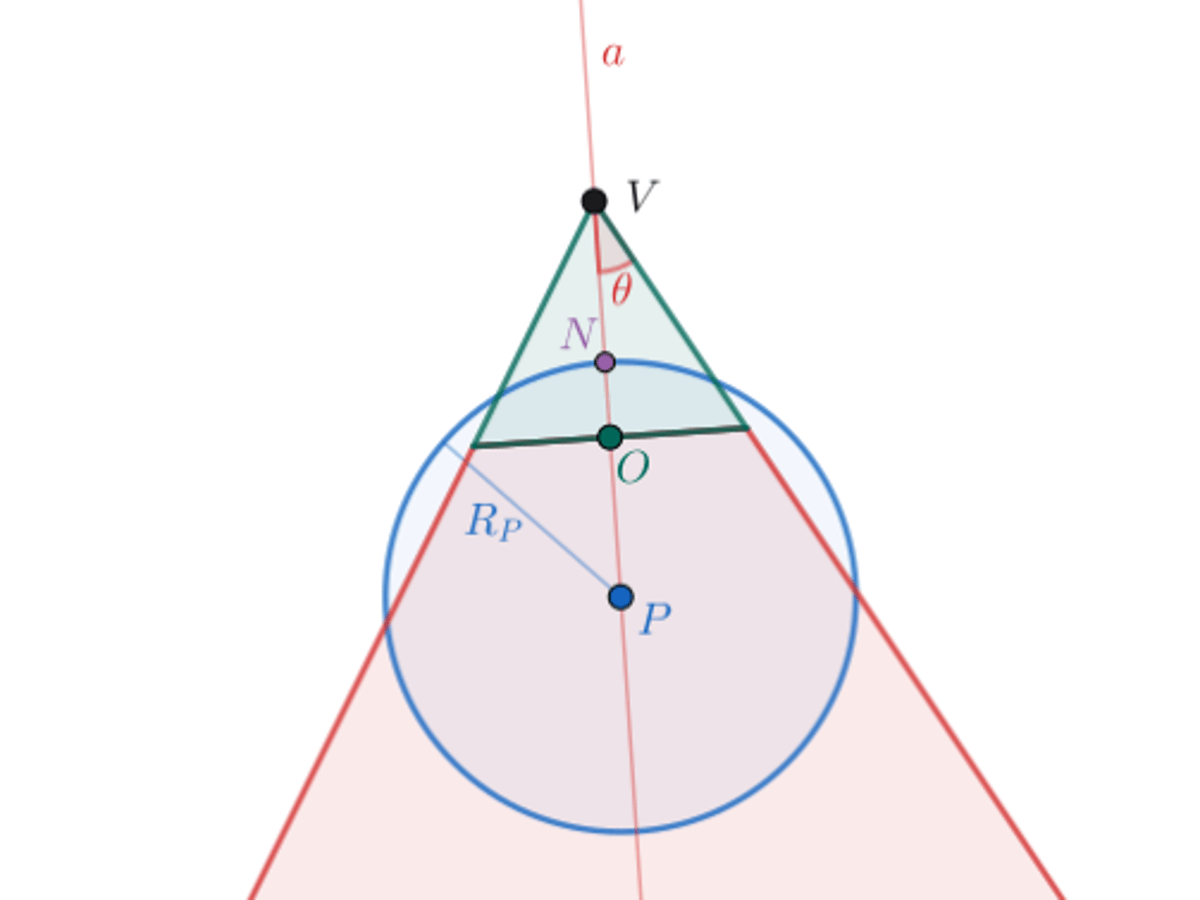}
        \caption{$P\in a$}
        \label{fig:algo_cone_sym}
    \end{subfigure}\\
    \begin{subfigure}[b]{0.4\textwidth}
        \centering
        \includegraphics[width=\textwidth]{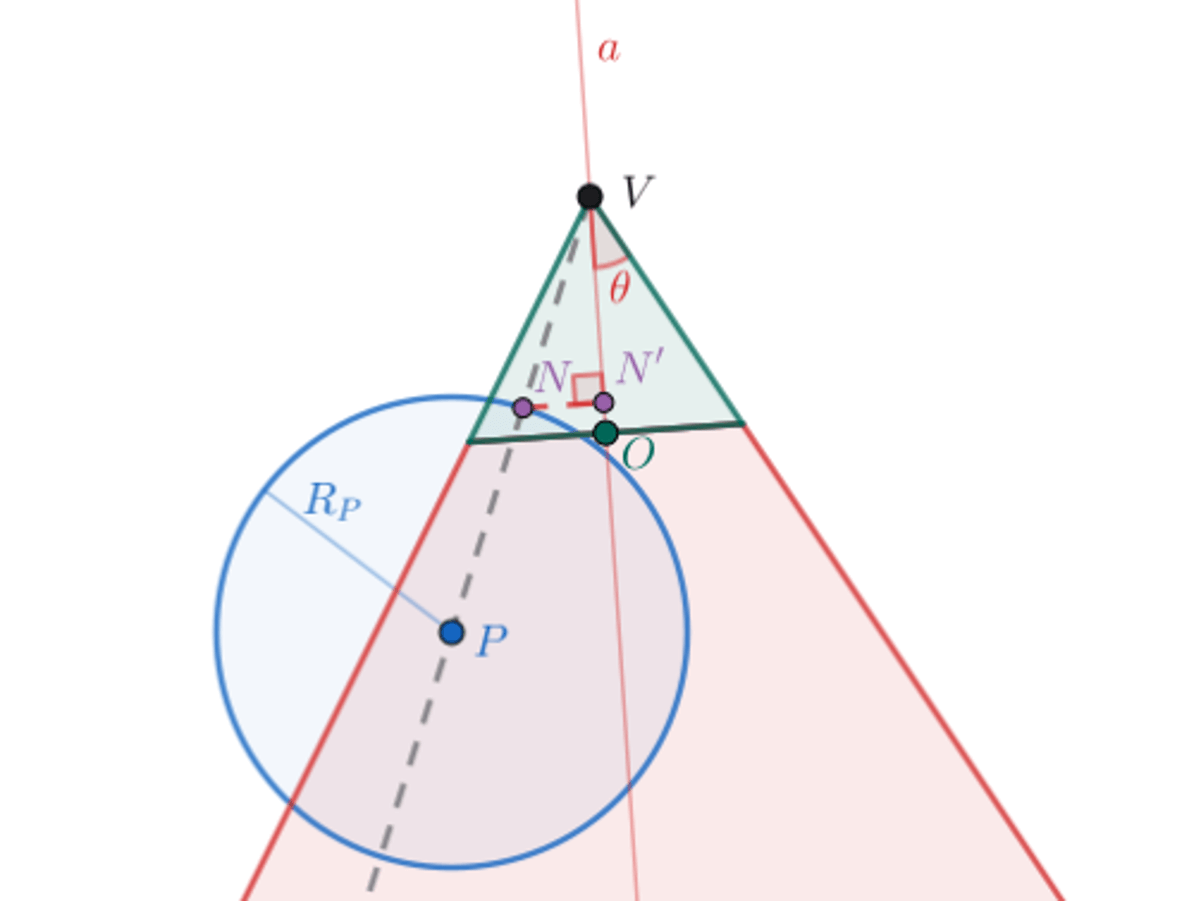}
        \caption{$P\in \mathcal{C}$}
        \label{fig:algo_cone_int}
    \end{subfigure}
    \begin{subfigure}[b]{0.4\textwidth}
        \centering
        \includegraphics[width=\textwidth]{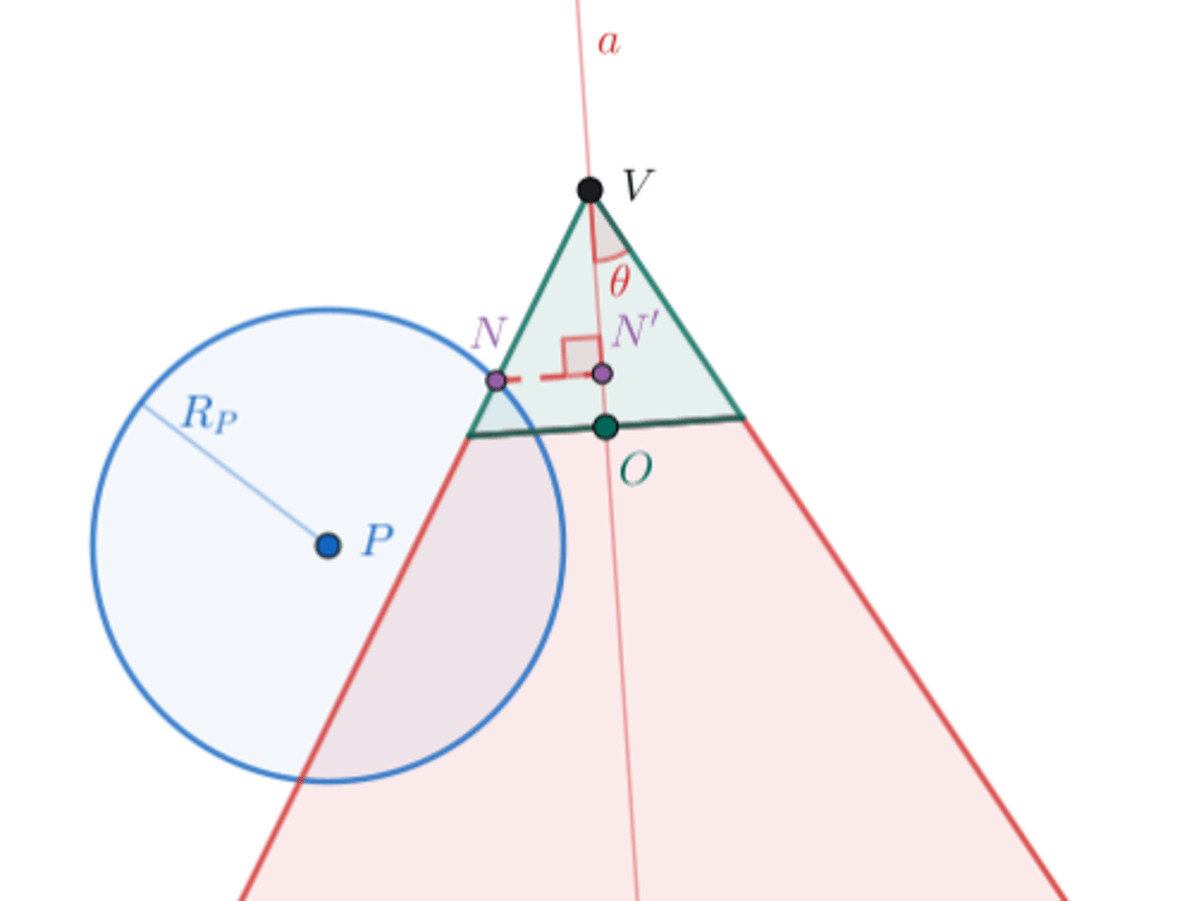}
        \caption{$P\in \mathrm{Ext}\,\mathcal{C}$}
        \label{fig:algo_cone_ext}
    \end{subfigure}
    \label{fig:algorithm_cone}
    \caption{2D section of the two bodies by a plane containing $a$ and $P$}
\end{figure}

\vspace{-0.2cm}
%%%%%%%%%%%%%%%%%%%%%%%%%%%%%%%%%%%%%%%%%%%%%%%%%%%%%%%%%%%%%%%
\subsection{Eclipse geometry}
\label{subsec:the_geometry_of_eclipses}

Consider a luminous spherical body $\Sun$ of radius $R_S$ and center $S(x_S,y_S,z_S)$; and an opaque spherical body $\M$ of radius $R_M$ and center $M(x_M,y_M,z_M)$. 

Assume that $R_S>R_M$, i.e. the luminous body is bigger than the opaque one.

Between the two spherical bodies $\Sun$ and $\M$ we have two types of tangents: external tangents and internal ones. These tangent lines will generate three noteworthy tangent cones:
\begin{itemize}[leftmargin=*]
\item The set of all exterior tangents between $\Sun$ and $\M$ form a double cone composed out of a convex circular cone that contains $\Sun$ and $\M$ in its interior and a convex circular cone with $\Sun$ and $\M$ in its exterior. The latter is called the \textit{antumbral cone} and we shall denote it by $\A$ (see Fig. \ref{fig:eclipse_umbral_and_antumbral_cones}); while the former can be divided by a plane containing all tangency points of the exterior tangents with $\M$, into a circular cone and a frustum. The circular cone is called the \textit{umbral cone} and we will denote it by $\U$ (see Fig. \ref{fig:eclipse_umbral_and_antumbral_cones}).

\item The set of all interior tangents between $\Sun$ and $\M$ form a double cone containing a convex circular cone that contains $\Sun$ in its interior and a convex circular cone that contains $\M$ in its interior. If we divide the latter by a plane containing all tangency points of the interior tangents with $\M$, into a circular cone and a frustum, the frustum, while not a circular cone, but rather a part of one, is called the \textit{penumbral cone} and we denote it by $\P$ (see Fig. \ref{fig:eclipse_penumbral_cone}).
\end{itemize}

Note that, although these three cones are not of the same geometrical type, we shall simply refer to them as \textit{eclipse cones}.

The names reflect the fact that for an observer inside either the umbral or antumbral cones, an \textit{eclipse} of $\Sun$ by $\M$ will be experienced; and for an observer inside the penumbral cone, a \textit{partial eclipse} of $\Sun$ by $\M$ will be experienced.

\begin{figure}[htbp]
    \centering
    \begin{subfigure}[b]{0.48\textwidth}
        \centering
        \includegraphics[width=\textwidth]{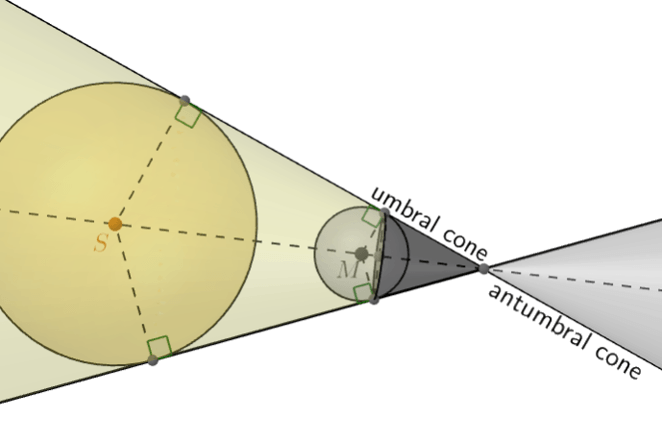}
        \caption{Umbral and antumbral cones}
        \label{fig:eclipse_umbral_and_antumbral_cones}
    \end{subfigure}
    \begin{subfigure}[b]{0.48\textwidth}
        \centering
        \includegraphics[width=\textwidth]{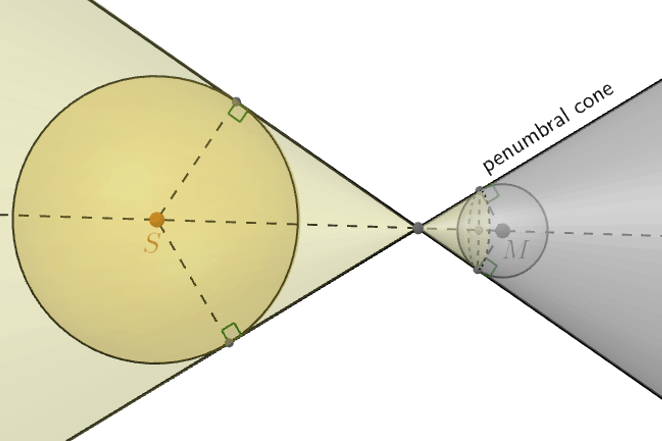}
        \caption{Penumbral cone}
        \label{fig:eclipse_penumbral_cone}
    \end{subfigure}
    \label{fig:eclipse_3Dintro}
    \caption{Eclipse cones}
\end{figure}

We will denote the common vertex of the umbral and antumbral cone by $V_\U=V_\A$ and the vertex of the penumbral cone by $V_\P$. Note that all three cones share the same symmetry axis, i.e. the line containing the points $S$, $M$, $V_\U$, and $V_\P$; and we will denote its normalized direction vector $\overrightarrow{SM}/SM$ with $\vec{a}$. The half-angle of the penumbral cone will be denoted by $\theta_\P$ and similarly we will denote by $\theta_\U$ the shared half-angle of both the umbral and antumbral cones. Finally, we will denote with $O_\U$ the center of the base of the umbral cone $\U$, and with $O_\P$ the center of the base of the penumbral cone $\P$.

In what follows we proceed to obtain formulas for the coordinates of $V_\U$, $V_\P$, $\vec{a}$, $O_\U$, $O_\P$, and for the angles $\theta_\U$, $\theta_\P$, in terms of the parameters of the spheres $\Sun$ and $\M$.

\begin{lemma}\label{lemma:direction_vector_of_cones}
    If $\vec{a}(x_{\vec{a}},y_{\vec{a}},z_{\vec{a}})$ is the direction vector of the antumbral and penumbral cones, then
    \[\vec{a}\left(\frac{x_M-x_S}{SM},\frac{y_M-y_S}{SM},\frac{z_M-z_S}{SM}\right).\]
    
    The direction vector of the umbral cone is $-\vec{a}$.
\end{lemma}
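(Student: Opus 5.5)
The plan is to locate the vertices of the cones on the line $SM$ and then read off the orientation of each convex cone from the configuration. All three eclipse cones share the symmetry axis through $S$ and $M$. The vector $\overrightarrow{SM}/SM$ is a unit vector along this axis, and its coordinates are exactly those displayed. So it only remains to decide, for each cone, whether it opens in the direction $\overrightarrow{SM}$ or in the opposite one, in the sense of Definition \ref{def:convex_circular_cone}, i.e.\ which sign of $\vec{a}$ satisfies $\angle(\overrightarrow{VC},\vec{a})\in[0,\theta]$ for the points $C$ of the cone.

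First I place the vertices. For the exterior tangents, work in a plane through the axis and take a common external tangent touching $\Sun$ at $T_S$ and $\M$ at $T_M$. The radii $ST_S$ and $MT_M$ are parallel, so the triangles $V_\U ST_S$ and $V_\U MT_M$ are similar, which gives $V_\U S/V_\U M=R_S/R_M>1$. Since the tangent points lie on the same side of the axis, $V_\U$ lies on the ray from $S$ through $M$, beyond $M$; concretely, $\overrightarrow{SV_\U}=\frac{R_S}{R_S-R_M}\overrightarrow{SM}$.

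For the interior tangents the same similarity argument applies, but the tangent points now lie on opposite sides of the axis. Hence $V_\P$ lies on the segment $[SM]$, with $\overrightarrow{SV_\P}=\frac{R_S}{R_S+R_M}\overrightarrow{SM}$.

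Next I read off the orientations.

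\textbf{Antumbral cone.} It is the nappe at $V_\U$ not containing $\Sun$ and $\M$. Since $S$ and $M$ lie in the direction $-\vec{a}$ from $V_\U$, the antumbral nappe consists of the points $C$ with $\langle\overrightarrow{V_\U C},\vec{a}\rangle\ge0$. Its direction vector is therefore $\vec{a}$.

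\textbf{Umbral cone.} It lies in the other nappe, the one containing $M$. From $V_\U$ the point $M$ is in the direction $-\vec{a}$, so the umbral direction vector is $-\vec{a}$.

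\textbf{Penumbral cone.} It lies in the nappe at $V_\P$ containing $\M$. Since $M-V_\P$ is a positive multiple of $\vec{a}$, its direction vector is $\vec{a}$.

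Lemma \ref{lemma:same_semiplan}, or simply the sign of $\langle\overrightarrow{VM},\vec{a}\rangle$, makes each of these side checks rigorous.

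The only delicate step is justifying the vertex positions, namely that $V_\U$ lies beyond $M$ and $V_\P$ lies between $S$ and $M$. Both follow from the similar-triangle ratio together with $R_S>R_M$. After that, the sign determinations are immediate.
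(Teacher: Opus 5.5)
Your proof is correct, and its opening observation is the whole of the paper's proof. There, $\vec a$ is introduced as $\overrightarrow{SM}/SM$ just before the lemma, so the displayed coordinates are immediate, and the paper calls the rest straightforward.

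The difference is that you actually justify the second claim: the antumbral and penumbral cones open along $\vec a$, while the umbral cone opens along $-\vec a$. The paper leaves this to the verbal definitions and the figures. You locate the vertices through the similarity ratio $R_S/R_M$, getting $\overrightarrow{SV_\U}=\frac{R_S}{R_S-R_M}\overrightarrow{SM}$ and $\overrightarrow{SV_\P}=\frac{R_S}{R_S+R_M}\overrightarrow{SM}$, which agree with \eqref{eq:coordVfinal} and \eqref{eq:coordVP}. You then read off the sign of $\langle\overrightarrow{VM},\vec a\rangle$ at each vertex.

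This is the same similar-triangle argument the paper gives only afterwards, in Lemmas \ref{Lemma:coordonatevertex} and \ref{lemma:vertex_of_penumbral_cone}. There the paper fixes the sign $\lambda>0$ by asserting that $\overrightarrow{MV_\U}$ and $\overrightarrow{SM}$ point in the same direction. That is exactly the orientation fact your same-side/opposite-side tangent argument establishes.

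The paper's route is brief because it treats orientation as part of the setup. Yours is self-contained and derives the orientations from $R_S>R_M$ rather than from a picture. It costs you anticipating the vertex computations, but it also supplies the step left unargued in the later lemma.
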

\begin{proof}
    It is straightforward, given that $\vec{a}=\overrightarrow{SM}/SM$.
\end{proof}

\begin{lemma}\label{Lemma:coordonatevertex}
    If $V_\U(x_{V_\U},y_{V_\U},z_{V_\U})$ is the common vertex of the umbral and antumbral cones then
   \begin{equation}
        \left\{
            \begin{array}{l}
                x_{V_\U} = \dfrac{R_S}{R_S - R_M} x_M - \dfrac{R_M}{R_S - R_M} x_S \\[10pt]
                y_{V_\U} =  \dfrac{R_S}{R_S - R_M}  y_M - \dfrac{R_M}{R_S - R_M} y_S \\[10pt]
                z_{V_\U} = \dfrac{R_S}{R_S - R_M}  z_M - \dfrac{R_M}{R_S - R_M} z_S.
            \end{array}
        \right.
        \label{eq:coordVfinal}
    \end{equation}
\end{lemma}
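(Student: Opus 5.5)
Our approach is a two-dimensional similar-triangles argument in a plane through the common symmetry axis, followed by a vector identity. Fix any plane containing the line $SM$. In it, the exterior tangents of $\Sun$ and $\M$ appear as two lines meeting at $V_\U$ on the line $SM$. Choose one of them and let $T_S$ and $T_M$ be its points of tangency with the circles of centers $S$, $M$ and radii $R_S$, $R_M$. Then $ST_S\perp V_\U T_S$ and $MT_M\perp V_\U T_M$, so $ST_S\parallel MT_M$. The right triangles $V_\U ST_S$ and $V_\U MT_M$ share the angle $\theta_\U$ at $V_\U$, hence they are similar, and
\[\frac{V_\U S}{V_\U M}=\frac{R_S}{R_M}.\]

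The next step is to fix the position of $V_\U$ on the line $SM$. Since the tangent is exterior and $R_S>R_M$, the vertex lies on the ray from $S$ through $M$, beyond $M$, so $M$ lies between $S$ and $V_\U$. This is the one point that needs care, and we plan to argue it as follows. The vertex cannot lie between $S$ and $M$, because an exterior tangent leaves both disks on the same side of the line, whereas a point between the centers would separate them; that configuration belongs to the interior tangents. Among the two remaining rays, the vertex must lie on the side of the smaller sphere, since $V_\U S>V_\U M$. The sign ambiguity in the similarity is therefore resolved with $V_\U S=V_\U M+SM$.

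With this ordering, $V_\U$ divides the segment $SM$ externally in the ratio $R_S:R_M$, which in vector form reads
\[R_M\,\overrightarrow{V_\U S}=R_S\,\overrightarrow{V_\U M}.\]
Writing $\overrightarrow{V_\U S}=S-V_\U$ and $\overrightarrow{V_\U M}=M-V_\U$, we solve for $V_\U$:
\[(R_S-R_M)\,V_\U=R_S\,M-R_M\,S.\]
Dividing by $R_S-R_M>0$ gives \eqref{eq:coordVfinal} coordinatewise.

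As a consistency check, $V_\U=M+\frac{R_M}{R_S-R_M}(M-S)$, so $V_\U$ lies on the axis in the direction $\vec{a}$ of Lemma \ref{lemma:direction_vector_of_cones}, beyond $M$, which agrees with the picture in Fig. \ref{fig:eclipse_umbral_and_antumbral_cones}.
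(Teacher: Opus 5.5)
Your proof is correct and takes essentially the same route as the paper: similar right triangles in an axial section give $V_\U S/V_\U M=R_S/R_M$, and the ordering $S$, $M$, $V_\U$ on the axis turns this ratio into the vector formula (the paper writes $\overrightarrow{MV_\U}=\lambda\overrightarrow{SM}$ with $\lambda=R_M/(R_S-R_M)>0$, you use the external-division identity $R_M\,\overrightarrow{V_\U S}=R_S\,\overrightarrow{V_\U M}$). Your explicit argument that $V_\U$ lies beyond $M$ is a worthwhile addition, since the paper only asserts that $\overrightarrow{MV_\U}$ and $\overrightarrow{SM}$ point the same way and silently uses $V_\U S=V_\U M+MS$.
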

\begin{proof}
We consider a section of our spheres and umbral cone, by a plane $\pi$ containing the line $SM$ (see Fig. \ref{fig:eclipse_2Dintro}). Furthermore, we denote by $T_{S1}$, $T_{S2}$, $T_{M1}$ and $T_{M2}$, the points of tangency of the external common tangent lines to the circles $C(S,R_S)$ and $C(M,R_M)$, respectively.

\begin{figure}[htbp]
    \centering
    \includegraphics[width=0.55\textwidth]{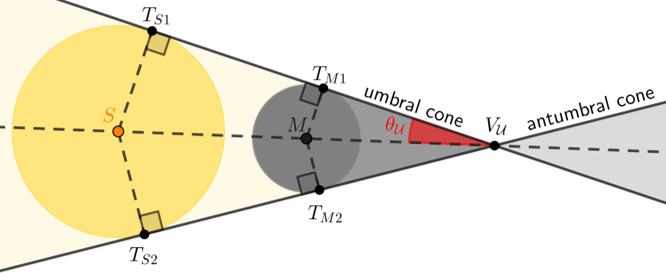}
    \caption{2D section of the two bodies}
    \label{fig:eclipse_2Dintro}
\end{figure}

Since
$
ST_{S1} \parallel MT_{M1},
$
by the fundamental theorem of similarity, we have that $\triangle V_\U MT_{M1} \sim \triangle V_\U ST_{S1}.$ Thus, we obtain 
$$
\frac{V_\U M}{V_\U S}
=
\frac{MT_{M1}}{ST_{S1}}
=
\frac{V_\U T_{M1}}{V_\U T_{S1}}.
$$

Note that \(MT_{M1}\) is equal to radius of $\M$, \(R_M\), and
\(ST_{S1}\) with the radius of $\Sun$, \(R_S\), therefore, we obtain:
$$
\frac{V_\U M}{V_\U S}
=
\frac{R_M}{R_S}
$$
therefore,
\begin{equation*}
V_\U M
=
\frac{R_M}{R_S - R_M}
\cdot MS.
\end{equation*}

By observing that the points $S$, $M$ and $V_\U$ are collinear, we deduce the existence of a scalar $\lambda \in \mathbb{R}$  such that 
\begin{equation}\label{eq:V_U_deduce}
\overrightarrow{MV_\U} = \lambda \overrightarrow{SM}.\end{equation} 

Since
\[
\left|\overrightarrow{MV_\U}\right|
=
\frac{R_M}{R_S-R_M}
\cdot
\left|\overrightarrow{MS}\right|
\]
we get
\[
|\lambda|
=
\frac{
\dfrac{R_M}{R_S-R_M}
\cdot
\left|\overrightarrow{MS}\right|
}
{\left|\overrightarrow{SM}\right|} = \frac{R_M}{R_S-R_M}.
\]

Moreover, the fact that the vectors \(\overrightarrow{MV_\U}\) and \(\overrightarrow{SM}\) are pointing in the same direction
implies that \(\lambda>0\). Therefore, $\lambda = \frac{R_M}{R_S-R_M}.$ By \eqref{eq:V_U_deduce}, we get
       \begin{equation*}
        \left\{
            \begin{array}{l}
                x_{V_\U} = \left(1+\dfrac{R_M}{R_S - R_M}\right) x_M - \dfrac{R_M}{R_S - R_M} x_S \\[10pt]
                y_{V_\U} =  \left(1+\dfrac{R_M}{R_S - R_M}\right)  y_M - \dfrac{R_M}{R_S - R_M} y_S \\[10pt]
                z_{V_\U} = \left(1+\dfrac{R_M}{R_S - R_M}\right)z_M - \dfrac{R_M}{R_S - R_M} z_S,
            \end{array}
        \right.
    \end{equation*}
    and the conclusion follows.
\end{proof}

\begin{lemma}\label{lemma:vertex_of_penumbral_cone}
    If $V_\P(x_{V_\P},y_{V_\P},z_{V_\P})$ is the vertex of the penumbral cone then
   \begin{equation}
        \left\{
            \begin{array}{l}
                x_{V_\P} = \dfrac{R_S}{R_S + R_M} x_M + \dfrac{R_M}{R_S + R_M} x_S \\[10pt]
                y_{V_\P} =  \dfrac{R_S}{R_S + R_M}  y_M + \dfrac{R_M}{R_S + R_M} y_S \\[10pt]
                z_{V_\P} = \dfrac{R_S}{R_S + R_M}  z_M + \dfrac{R_M}{R_S + R_M} z_S.
            \end{array}
        \right.
        \label{eq:coordVP}
    \end{equation}
\end{lemma}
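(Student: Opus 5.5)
I will mirror the proof of Lemma \ref{Lemma:coordonatevertex}, replacing the external common tangents by the internal ones. The plan is to cut the configuration by a plane $\pi$ containing the line $SM$ and work there. In $\pi$, the circles $C(S,R_S)$ and $C(M,R_M)$ have two internal common tangents. These meet at $V_\P$, which lies on the segment $[SM]$, strictly between the two centers (we assume, as is implicit, that the spheres are disjoint, i.e.\ $SM>R_S+R_M$). Let $T_{S}$ and $T_{M}$ be the tangency points of one internal tangent with the two circles.

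First, since $ST_S\perp T_ST_M$ and $MT_M\perp T_ST_M$, the radii $ST_S$ and $MT_M$ are parallel. The triangles $V_\P ST_S$ and $V_\P MT_M$ have vertically opposite angles at $V_\P$. By the fundamental theorem of similarity (or AA), $\triangle V_\P MT_M\sim\triangle V_\P ST_S$, which gives
\[\frac{V_\P M}{V_\P S}=\frac{R_M}{R_S}.\]
Because $V_\P\in(SM)$, we have $V_\P M+V_\P S=SM$, and hence
\[V_\P M=\frac{R_M}{R_S+R_M}\,SM.\]
This additive relation is the only change from the umbral case, where one had a difference instead.

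Next, by collinearity we write $\overrightarrow{MV_\P}=\lambda\overrightarrow{SM}$. The length computation gives $|\lambda|=R_M/(R_S+R_M)$. Since $V_\P$ lies between $S$ and $M$, the vector $\overrightarrow{MV_\P}$ points from $M$ back toward $S$, so $\lambda<0$ and $\lambda=-R_M/(R_S+R_M)$. Expanding coordinate-wise gives
\[x_{V_\P}=x_M+\lambda(x_M-x_S)=\Bigl(1-\tfrac{R_M}{R_S+R_M}\Bigr)x_M+\tfrac{R_M}{R_S+R_M}x_S=\tfrac{R_S}{R_S+R_M}x_M+\tfrac{R_M}{R_S+R_M}x_S,\]
and the same for $y$ and $z$. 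Equivalently, $V_\P$ is the point dividing $[SM]$ internally in the ratio $R_S:R_M$, measured from $S$.

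The main obstacle is the justification that $V_\P$ lies strictly between $S$ and $M$, which fixes both the sum relation and the sign of $\lambda$. I would argue it as follows. An internal tangent separates the two circles, so $S$ and $M$ lie in opposite half-planes of that tangent line. The line $SM$ therefore crosses the tangent at a point of the open segment $(SM)$, and by symmetry about $SM$ this point is the common intersection $V_\P$ of both internal tangents.
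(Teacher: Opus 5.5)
Your proposal is correct and follows the paper's route. The paper's proof only says the result follows by the argument of Lemma \ref{Lemma:coordonatevertex}, and you carry that argument out with the internal tangents: $V_\P M+V_\P S=SM$ replaces the difference used for the umbral vertex, and you correctly take $\lambda=-R_M/(R_S+R_M)<0$ because $V_\P$ lies between $S$ and $M$.
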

\begin{proof}
If follows by an analogous argument as in Lemma \ref{Lemma:coordonatevertex}.
\end{proof}

\begin{lemma}\label{lemma:half-angles_of_cones}
Let $\theta_\U$ be the shared half-angle of both the umbral and antumbral cones, and let $\theta_\P$ be the half-angle of penumbral cone. Then 
$$
\theta_\U=\arcsin\left(\frac{R_S-R_M}{SM}\right)
\qquad\text{and}\qquad
\theta_\P=\arcsin\left(\frac{R_S+R_M}{SM}\right).
$$
\end{lemma}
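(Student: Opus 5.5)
We work, as in the proof of Lemma \ref{Lemma:coordonatevertex}, in a plane $\pi$ containing the line $SM$ (Fig. \ref{fig:eclipse_2Dintro}). Each eclipse cone is a surface of revolution about $SM$, so its half-angle equals the angle in $\pi$ between a generating tangent line and the axis. The plan is to read off the sine of that angle from a right triangle formed by a vertex, a center, and a tangency point, and then use the known distance from the vertex to that center.

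For the umbral and antumbral cones, take the external common tangent through $V_\U$ with tangency points $T_{S1}$ and $T_{M1}$. Since the radius is perpendicular to the tangent, the triangle $V_\U S T_{S1}$ is right-angled at $T_{S1}$. Its angle at $V_\U$ is $\theta_\U$, so $\sin\theta_\U = ST_{S1}/V_\U S = R_S/V_\U S$. From Lemma \ref{Lemma:coordonatevertex} we have $V_\U M = \frac{R_M}{R_S-R_M}SM$. Because $M$ lies between $S$ and $V_\U$ (as $\lambda>0$ there), it follows that $V_\U S = SM + V_\U M = \frac{R_S}{R_S-R_M}SM$. Therefore $\sin\theta_\U = \frac{R_S-R_M}{SM}$.

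An equivalent check avoids the vertex entirely. Draw the line through $M$ parallel to $T_{S1}T_{M1}$. It forms a right triangle with hypotenuse $SM$, a leg $R_S-R_M$ opposite the angle $\theta_\U$, and the same angle to the axis. This gives the same formula directly.

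For the penumbral cone, repeat the argument with an internal common tangent, which crosses the axis at $V_\P$. By Lemma \ref{lemma:vertex_of_penumbral_cone}, $V_\P$ lies between $S$ and $M$. By the same similarity argument, $V_\P S = \frac{R_S}{R_S+R_M}SM$. In the triangle $V_\P S T_S$, right-angled at the tangency point $T_S$, we get $\sin\theta_\P = R_S/V_\P S = \frac{R_S+R_M}{SM}$.

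To conclude, the half-angles lie in $(0,\pi/2]$ by Definition \ref{def:convex_circular_cone}, and $\arcsin$ is the inverse of $\sin$ on that range. This gives both formulas, provided the arguments lie in $(0,1]$. That holds whenever the spheres are disjoint, i.e. $SM > R_S+R_M$, which is implicitly assumed so that internal tangents exist.

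The main point needing care is the position of the vertex: whether $V_\U S$ is a sum or a difference of $SM$ and $V_\U M$, and likewise for $V_\P$. This is settled by the sign of $\lambda$ from Lemma \ref{Lemma:coordonatevertex} and by the convex-combination form of $V_\P$ in Lemma \ref{lemma:vertex_of_penumbral_cone}.
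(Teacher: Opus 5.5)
Your proof is correct and follows essentially the paper's argument: both read $\sin\theta$ off a right triangle formed by a tangency point, a center on the axis and a point on the axis. The only difference is how the hypotenuse is obtained. The paper gets $\theta_\U$ from the auxiliary right triangle $MKS$, with $K$ the foot of the perpendicular from $M$ to $ST_{S1}$, which is exactly your ``equivalent check''. For $\theta_\P$ it uses the mediant $R_S/V_\P S = R_M/V_\P M = (R_S+R_M)/SM$ instead of importing $V_\P S$ from Lemma~\ref{lemma:vertex_of_penumbral_cone}.
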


\begin{proof}
We continue with the same setting as in the proof of Lemma \ref{Lemma:coordonatevertex}. If $K$ is the foot of perpendicular from $M$ to $ST_{S_1}$ (see Fig. \ref{fig:eclipse_2D_theta_u}), then, in the right triangle $MKS$ we have,
$$\theta_\U=\arcsin\left(\frac{R_S-R_M}{SM}\right).$$

\begin{figure}[htbp]
    \centering
    \begin{subfigure}[b]{0.48\textwidth}
        \centering
        \includegraphics[width=\textwidth]{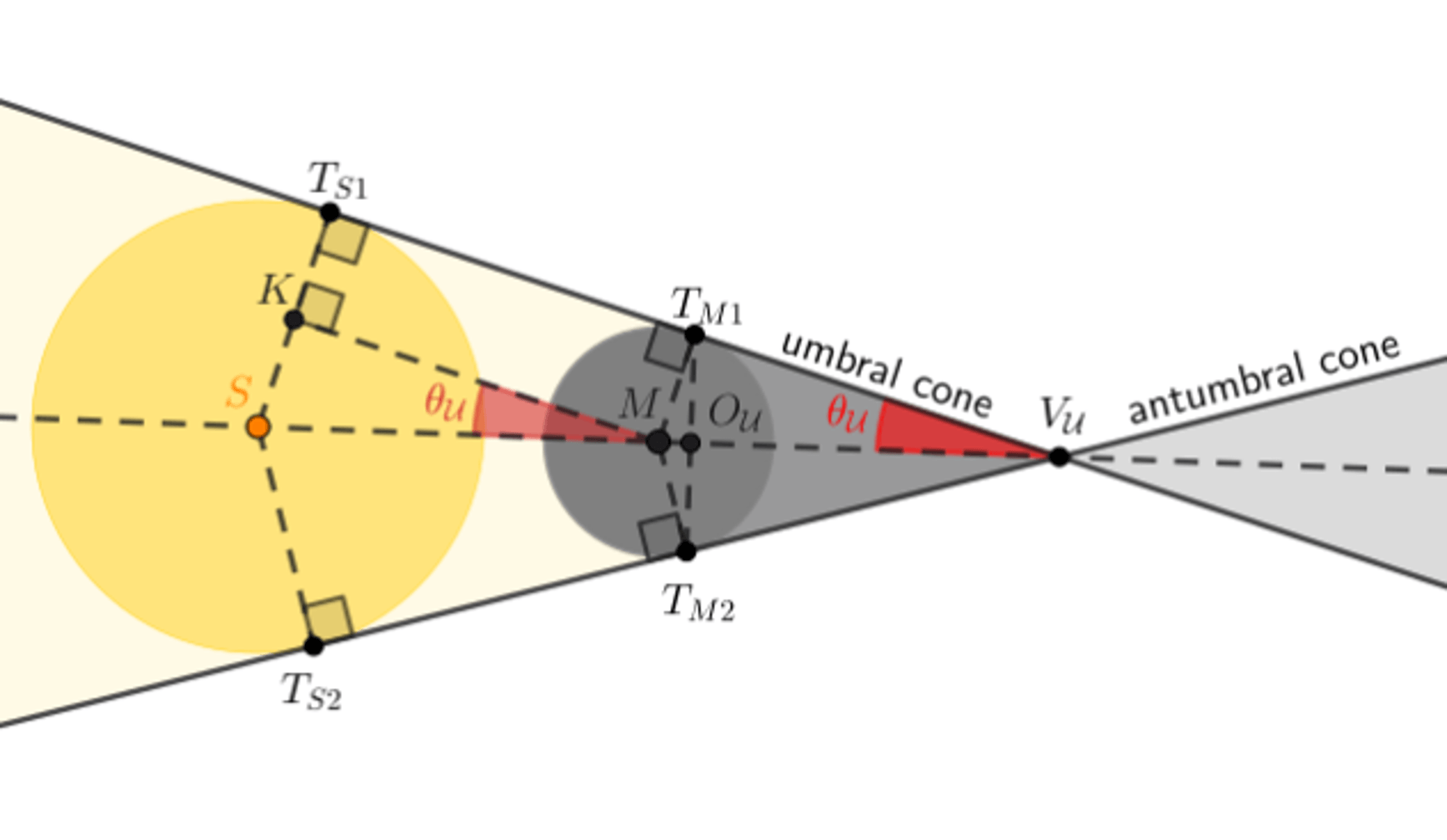}
        \caption{The angle $\theta_\U$}
        \label{fig:eclipse_2D_theta_u}
    \end{subfigure}
    \hfill
    \begin{subfigure}[b]{0.48\textwidth}
        \centering
        \includegraphics[width=\textwidth]{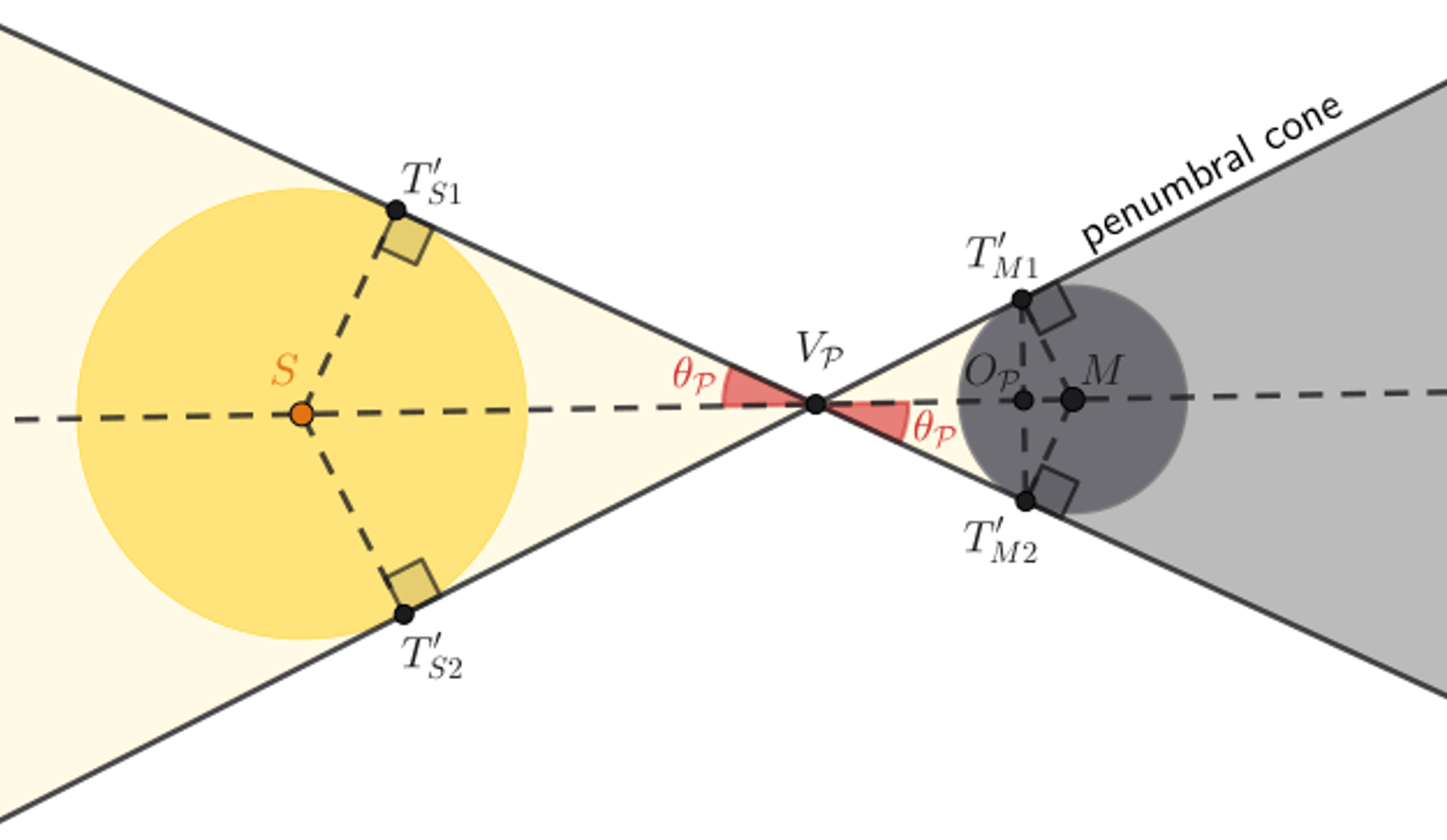}
        \caption{The angle $\theta_\P$}
        \label{fig:eclipse_2D_theta_p}
    \end{subfigure}
    \caption{2D section of the two bodies}
    \label{fig:eclipse_2D_angles}
\end{figure}

In order to prove the second formula, we denote by $T'_{S1}, T'_{S2}, T'_{M1}$ and $ T'_{M2}$, the points of tangency of the penumbral tangent lines with $\Sun$ and $\M$, respectively (see Fig. \ref{fig:eclipse_2D_theta_p}).

Applying the sine function in the right triangles $V_\P T'_{S1}S$ and $V_\P T'_{M2}M$ we get  
$$\sin(\theta_\P)=\frac{R_S}{V_\P S} = \frac{R_M}{V_\P M} =\frac{R_S+R_M}{V_\P S + V_\P M} = \frac{R_S+R_M}{SM}.$$ The conclusion follows immediately. 
\end{proof}

\begin{lemma} \label{lemma:heights_of_cones}
    If $O_\U(x_{O_\U},y_{O_\U},z_{O_\U})$ is the center of the base of the umbral cone $\U$, then
    \[
    |V_{\U}O_{\U}|=R_M\frac{\cos^2(\theta_\U)}{\sin(\theta_\U)}\quad\text{ and }\quad
    \left\{
    \begin{array}{rcl}
         x_{O_{\U}}&=&x_{V_{\U}}-|V_{\U}O_{\U}|\cdot x_{\vec{a}}\\
         y_{O_{\U}}&=&y_{V_{\U}}-|V_{\U}O_{\U}|\cdot y_{\vec{a}}\\
         z_{O_{\U}}&=&z_{V_{\U}}-|V_{\U}O_{\U}|\cdot z_{\vec{a}}.
    \end{array}
    \right.
    \]
    Similarly, if $O_{\P}(x_{O_\P},y_{O_\P},z_{O_\P})$ is the center of the base of the penumbral cone $\P$, then
    \[
    |V_{\P}O_{\P}|=R_M\frac{\cos^2(\theta_\P)}{\sin(\theta_\P)}\quad\text{ and }\quad
    \left\{
    \begin{array}{rcl}
         x_{O_{\P}}&=&x_{V_{\P}}+|V_{\P}O_{\P}|\cdot x_{\vec{a}}\\
         y_{O_{\P}}&=&y_{V_{\P}}+|V_{\P}O_{\P}|\cdot y_{\vec{a}}\\
         z_{O_{\P}}&=&z_{V_{\P}}+|V_{\P}O_{\P}|\cdot z_{\vec{a}}.
    \end{array}
    \right.
    \]
\end{lemma}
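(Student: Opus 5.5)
We work in a plane $\pi$ through the line $SM$, exactly as in the proof of Lemma \ref{Lemma:coordonatevertex}, and reuse the tangency points $T_{M1}$ (exterior tangents) and $T'_{M2}$ (interior tangents). Each height is obtained by locating the foot of the perpendicular from a tangency point onto the axis. The coordinates then follow by moving from the vertex along $\mp\vec{a}$, using Lemma \ref{lemma:direction_vector_of_cones}.

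\textbf{Umbral cone.} By definition, the base of $\U$ is cut by the plane containing all tangency points of the exterior tangents with $\M$. By rotational symmetry about $a$, this plane is orthogonal to $a$, so $O_\U$ is the foot of the perpendicular from $T_{M1}$ to $a$.

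In the right triangle $V_\U T_{M1} M$, the right angle is at $T_{M1}$ because the radius is perpendicular to the tangent. The angle at $V_\U$ is $\theta_\U$. Hence
\[
V_\U M=\frac{R_M}{\sin\theta_\U},\qquad V_\U T_{M1}=R_M\frac{\cos\theta_\U}{\sin\theta_\U}.
\]
The segment $T_{M1}O_\U$ is the altitude of this triangle from the right angle. Projecting the leg $V_\U T_{M1}$ onto the hypotenuse gives
\[
V_\U O_\U=V_\U T_{M1}\cos\theta_\U=R_M\frac{\cos^2\theta_\U}{\sin\theta_\U}.
\]

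For the coordinates, note that by Lemma \ref{Lemma:coordonatevertex} (with $\lambda>0$) the vertex $V_\U$ lies beyond $M$ in the direction $\vec{a}$. Also $O_\U$ lies on the segment $V_\U M$, since the foot of an altitude lies on the hypotenuse. Therefore $\overrightarrow{V_\U O_\U}=-|V_\U O_\U|\,\vec{a}$, which is consistent with Lemma \ref{lemma:direction_vector_of_cones}, where the umbral direction is $-\vec{a}$. The coordinate formulas follow.

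\textbf{Penumbral cone.} The argument is identical, using the right triangle $V_\P T'_{M2} M$. Its right angle is at $T'_{M2}$, its angle at $V_\P$ is $\theta_\P$, and it has $MT'_{M2}=R_M$, as in the proof of Lemma \ref{lemma:half-angles_of_cones}. This gives $V_\P O_\P=R_M\cos^2\theta_\P/\sin\theta_\P$.

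By Lemma \ref{lemma:vertex_of_penumbral_cone}, $V_\P$ lies between $S$ and $M$, so the direction from $V_\P$ towards $M$ is $+\vec{a}$. Since $O_\P\in[V_\P M]$, we get $\overrightarrow{V_\P O_\P}=|V_\P O_\P|\,\vec{a}$, which yields the stated coordinates.

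\textbf{Main obstacle.} The computations are routine. The delicate step is justifying the sign and orientation in each case: $O$ must lie on the segment between the vertex and $M$, and the vertex must be correctly placed relative to $S$ and $M$. Both facts come from the collinearity arguments in Lemmas \ref{Lemma:coordonatevertex} and \ref{lemma:vertex_of_penumbral_cone}. Alongside this, the claim that the tangency circle's plane is orthogonal to $a$ follows from rotational symmetry.
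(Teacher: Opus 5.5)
Your proof is correct and follows essentially the paper's route: both work in the right triangle $V_\U T_{M1}M$ with $O_\U$ as the foot of the altitude from $T_{M1}$. The only difference is cosmetic: the paper uses the cathetus theorem and Pythagoras to get $V_\U O_\U=(V_\U M^2-R_M^2)/V_\U M$ and then substitutes $V_\U M=R_M/\sin\theta_\U$, whereas you project the leg $V_\U T_{M1}=R_M\cot\theta_\U$ onto the hypotenuse directly. Your explicit treatment of the penumbral case and of the signs $\mp\vec{a}$, via the positions of $V_\U$ and $V_\P$ relative to $S$ and $M$, spells out what the paper leaves as following ``immediately''.
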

\begin{proof}
    We continue in the same setting as in the proof of Lemma \ref{lemma:half-angles_of_cones} (see Fig. \ref{fig:eclipse_2D_angles}). 

    In the right triangle $V_\U T_{M1}M$, by the cathetus theorem we have $V_\U O_\U\cdot V_\U M=V_\U T_{M1}^2$, and by the Pythagorean theorem $V_\U T_{M1}^2=V_\U M^2-R_M^2$, hence
    \begin{equation}\label{eq:VO_lengths}
        V_\U O_\U=\frac{V_\U M^2-R_M^2}{V_\U M}.
    \end{equation}
    Moreover, in the same triangle, we have that the angle $\angle MV_\U T_{M_1}=\theta_\U$, therefore $\sin(\theta_\U)=R_M/V_\U M$, which implies $V_\U M=R_M/\sin(\theta_\U)$. Substituting in \eqref{eq:VO_lengths}, we get
     \[
    V_{\U}O_{\U}=R_M\frac{\cos^2(\theta_\U)}{\sin(\theta_\U)}.
    \]
    Finally, given that $|\vec{a}|=1$ and that $\overrightarrow{V_\U O_\U}$ and $\vec{a}$ are collinear, we obtain $\overrightarrow{V_\U O_\U}=-|V_\U O_\U|\cdot \vec{a}$ and the conclusion follows immediately.
\end{proof}

\begin{remark}\label{remark:eclipse_parameters}
    Given the parameters of the two spheres $\Sun$ and $\M$, by Lemma \ref{lemma:direction_vector_of_cones}, Lemma \ref{Lemma:coordonatevertex}, Lemma \ref{lemma:vertex_of_penumbral_cone}, Lemma \ref{lemma:half-angles_of_cones} and Lemma \ref{lemma:heights_of_cones}, we now know the parameters of the eclipse cones, i.e.:
    \begin{itemize}
        \item the umbral cone is the cone obtained via an orthogonal section of $\mathrm{C}(V_\U,-\vec{a},\theta_\U)$ at $O_\U$;
        \item the antumbral cone is $\A=\mathrm{C}(V_\U,\vec{a},\theta_\U)$;
        \item the penumbral cone is the frustum of $\P=\mathrm{C}(V_\P,\vec{a},\theta_\P)$, obtained via an orthogonal section at $O_\P$.
    \end{itemize}
    
    Moreover, if $a$ is a line passing through $V_\U$ and with direction vector $\vec{a}$, then to the double cone formed by $\A$ and $\mathrm{C}(V_\U,-\vec{a},\theta_\U)$ we will refer to as the \textit{umbral double cone} and we will denote it by
    \[\mathcal{DU}=\mathrm{DC}(V_\U,a,\theta_U).\]
    Similarly, the double cone obtained from $\P$ and its symmetric with respect to its vertex, will be called the \textit{penumbral double cone} and we will denote it by
    \[\mathcal{DP}=\mathrm{DC}(V_\P,a,\theta_P).\]
\end{remark}

%%%%%%%%%%%%%%%%%%%%%%%%%%%%%%%%%%%%%%%%%%%%%%%%%%%%%%%%%%%%%%%
\subsection{Annular vs. total eclipses} \label{subsec:annular_vs_total_eclipses}

In order to decide if an eclipse is annular or total, we must first recall the concept of angular diameter and its well-known formula.

\begin{definition} \label{def:angular_diameter}
The \textit{angular diameter} (also known as the apparent size)  of a spherical celestial body relative to an observer is the measure of the angle under which the observer sees the respective celestial body. 
\end{definition}

\begin{proposition}\label{prop:computation_of_angular_diameter}
    Consider $\Sun$ to be a spherical body of center $S$ and radius $R_S$; and $P$ to be an observer (represented as a point at a distance of $D=PS$ from $S$). Then the angular diameter of $\Sun$ relative to the observer $P$ is 
    \[
        \theta
        =
        2\cdot\arctan\left(
        \frac{R_S}{\sqrt{D^2-R_S^2}}
        \right).
    \]
\end{proposition}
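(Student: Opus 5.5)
The plan is to reduce the computation to a planar right-triangle argument, exactly as was done for the eclipse cones in Lemma \ref{lemma:half-angles_of_cones}. We assume $D>R_S$, so that the observer $P$ lies outside $\Sun$ and the expression under the square root is positive. The angle under which $P$ sees $\Sun$ is the apex angle of the convex circular cone of vertex $P$ formed by all tangent lines from $P$ to $\Sun$. By rotational symmetry about the line $PS$, it suffices to work in any plane $\pi$ containing $P$ and $S$. That plane cuts $\Sun$ in the great circle $C(S,R_S)$ and cuts the tangent cone in the two tangent lines from $P$ to this circle. The angular diameter $\theta$ is then the angle between these two tangent lines, i.e.\ twice the half-angle of the tangent cone.

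Next, let $T$ be one of the two tangency points in $\pi$. Since the radius $ST$ is perpendicular to the tangent line $PT$, the triangle $PTS$ is right-angled at $T$. Its hypotenuse is $PS=D$ and one cathetus is $ST=R_S$. By the Pythagorean theorem the other cathetus is $PT=\sqrt{D^2-R_S^2}$. The line $PS$ is the symmetry axis of the figure, so the reflection across $PS$ in $\pi$ exchanges the two tangency points. Hence $PS$ bisects the angle between the tangents, and $\angle TPS=\theta/2$.

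Finally, in the right triangle $PTS$ we have $\tan(\angle TPS)=ST/PT=R_S/\sqrt{D^2-R_S^2}$. The angle $\angle TPS$ is acute, so it lies in the range of $\arctan$, which gives $\theta/2=\arctan\bigl(R_S/\sqrt{D^2-R_S^2}\bigr)$ and the formula follows.

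There is no genuine obstacle here. The only point requiring care is justifying that the angular diameter from Definition \ref{def:angular_diameter} equals the planar angle between the two tangents in a plane through $PS$. This is the symmetry argument in the first step: every plane through the axis $PS$ yields the same angle, and this angle is the maximal angle between rays from $P$ meeting $\Sun$.
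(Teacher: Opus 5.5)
Your proof is correct. The paper gives no proof of this proposition; it only ``recalls'' it as a well-known formula. Your tangent-cone argument is the standard derivation: the right angle at the tangency point $T$, Pythagoras giving $PT=\sqrt{D^2-R_S^2}$, and symmetry about $PS$ giving $\angle TPS=\theta/2$. It is the same right-triangle reasoning the paper uses for the half-angles in Lemma~\ref{lemma:half-angles_of_cones}, and equivalently yields $\sin(\theta/2)=R_S/D$.

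Your explicit assumption $D>R_S$ is left implicit in the paper, but it is genuinely needed for the formula to make sense.
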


\begin{remark} \label{remark:total_vs_annular_eclipse}
The type of eclipses experienced by an observer depends on the relationship between the angular diameters of the two bodies producing the eclipse. If we consider again the luminous spherical body $\Sun$ and the opaque spherical body $\M$ of Section \ref{subsec:the_geometry_of_eclipses}, and if we denote their corresponding angular diameters by $\theta_\Sun$ and $\theta_\M$, respectively (computed via Proposition \ref{prop:computation_of_angular_diameter}), then the following affirmations hold:
\begin{itemize}
    \item If 
    $
    \theta_\M\geq \theta_\Sun,$
     then a \textit{total eclipse} occurs.
    \item If
    $
    \theta_\M<\theta_\Sun,
    $
    then $\M$ cannot completely ``cover'' $\Sun$, resulting in an \textit{annular eclipse}.
\end{itemize}
\end{remark}

%%%%%%%%%%%%%%%%%%%%%%%%%%%%%%%%%%%%%%%%%%%%%%%%%%%%%%%%%%%%%%%
%%%%%%%%%%%%%%%%%%%%%%%%%%%%%%%%%%%%%%%%%%%%%%%%%%%%%%%%%%%%%%%
%%%%%%%%%%%%%%%%%%%%%%%%%%%%%%%%%%%%%%%%%%%%%%%%%%%%%%%%%%%%%%%
\section{The three-body problem}
\label{sec:3bps}

In order to determine if an eclipse is occurring over an extended period of time we need to know the positions of our celestial bodies (as we saw in the previous section). To obtain them, we need to analyze how the bodies interact gravitationally with each other, i.e. the $n$-body problem ($n\in\mathbb{N}^\ast$). This problem describes the motion of $n$ bodies as solutions of a system of differential equations (equations of motion) given their initial position and velocity. To derive the equations of motion for an $n$-body system, one may follow the approach  presented in \cite[\S 2.3]{book:Valtonen2005}.

In our paper, we will restrict the problem to the case of the Sun-Earth-Moon system, hence we will only deal with the three-body problem. In this context, we consider three point masses interacting gravitationally in a fixed Cartesian reference frame $Oxyz$, of the Euclidean space $\mathbb{R}^3$; with position vectors $\mathbf{r}_1\colon \mathbb{R} \to \mathbb{R}^3$, $\mathbf{r}_2\colon \mathbb{R} \to \mathbb{R}^3$, and $\mathbf{r}_3\colon \mathbb{R} \to \mathbb{R}^3$, and masses $m_1$, $m_2$, and $m_3$, respectively. 

The second-order vector differential equations of motion for the three-body system is:
\begin{equation}\label{eq:motion_equations_3_body_problem}
\left\{
\begin{array}{rcl}
\displaystyle \frac{\d ^2 \mathbf{r}_1}{\d t^2} &=& \displaystyle -G \left( \frac{m_2 (\mathbf{r}_1 - \mathbf{r}_2)}{|\mathbf{r}_1 - \mathbf{r}_2|^3} + \frac{m_3 (\mathbf{r}_1 - \mathbf{r}_3)}{|\mathbf{r}_1 - \mathbf{r}_3|^3} \right) \\[0.5cm]
\displaystyle \frac{\d ^2 \mathbf{r}_2}{\d t^2} &=& \displaystyle -G \left( \frac{m_1 (\mathbf{r}_2 - \mathbf{r}_1)}{|\mathbf{r}_2 - \mathbf{r}_1|^3} + \frac{m_3 (\mathbf{r}_2 - \mathbf{r}_3)}{|\mathbf{r}_2 - \mathbf{r}_3|^3} \right) \\[0.5cm]
\displaystyle \frac{\d ^2 \mathbf{r}_3}{\d t^2} &=& \displaystyle -G \left( \frac{m_1 (\mathbf{r}_3 - \mathbf{r}_1)}{|\mathbf{r}_3 - \mathbf{r}_1|^3} + \frac{m_2 (\mathbf{r}_3 - \mathbf{r}_2)}{|\mathbf{r}_3 - \mathbf{r}_2|^3} \right),
\end{array}
\right.
\end{equation}
where $G$ is the universal gravitational constant. According to the Committee on Data for Science and Technology (CODATA) of The National Institute of Standards and Technology (NIST) recommended values of the fundamental physical constants of 2022 \cite{art:CODATA}, $G$ has the approximate value:
\begin{equation} \label{eq:valoareG}
    G \approx 6.67430 \times 10^{-11} \; \frac{\text{m}^3}{\text{kg} \cdot \text{s}^2}.
\end{equation}

To solve system \eqref{eq:motion_equations_3_body_problem}, one also requires some initial values, hence obtaining a so-called initial value problem (IVP). Even so, finding an explicit closed-form solution is extremely difficult: Henri Poincaré demonstrated the complex and highly sensitive nature of the problem, revealing fundamental difficulties in obtaining general explicit solutions. Even small perturbations in the initial values may lead to significantly different dynamical evolutions, particularly over long integration times. That being said, using modern numerical methods alongside computer-assisted computations, one can obtain accurate short-term approximate solutions. One of the most popular methods to use, is the Runge-Kutta-Fehlberg method (RKF45) for the approximation of the solutions of ordinary differential equations.

%%%%%%%%%%%%%%%%%%%%%%%%%%%%%%%%%%%%%%%%%%%%%%%%%%%%%%%%%%%%%%%
\subsection{Data gathering}\label{subsec:data_gathering}

The equations of motion form a system of second-order ordinary differential equations \eqref{eq:motion_equations_3_body_problem}, and consequently, a well-posed IVP requires the specification of both the position vectors and the velocity vectors of all bodies at a common initial epoch. 

The initial epoch is fixed to $t_0 = \text{1 January 2026, 00:00 UTC}$. All position and velocity vectors are expressed in a Cartesian coordinate system relative to the \emph{Solar System Barycenter}. This barycentric reference frame is standard in celestial mechanics and ensures consistency with modern ephemerids.

The initial conditions for the selected celestial bodies (our analysis is restricted to the Sun-Earth-Moon system) are obtained from the \emph{NASA Jet Propulsion Laboratory (JPL) HORIZONS On-Line Ephemeris System} \cite{JPL:HORIZONS}. This service provides high-precision state vectors computed using observational data and validated dynamical models.

To generate the initial state vectors, the NASA JPL HORIZONS On-Line Ephemeris System \cite{JPL:HORIZONS} was queried with the following input settings:

{
\begin{itemize}
    \item \textit{Ephemeris Type:} Vector table
    \item \textit{Target Body:} Sun, Earth, Moon
    \item \textit{Coordinate Center:} Solar System Barycenter (SSB) [500@0]
    \item \textit{Time Specification:} Start = 2026-01-01 UTC, Stop = 2026-01-02, Step = 1 day
    \item \textit{Table Settings:} Defaults
\end{itemize}
}

Finally, in order to simplify computations, we will assume that the three celestial bodies which we consider, are perfectly spherical and of constant mass, therefore we will only consider the following values as constants for our celestial bodies:
{
\[
\begin{array}{lll}
    m_\text{Sun}=1988410\times 10^{24} \mathrm{kg}, & m_\text{Earth}=5.97219\times 10^{24} \mathrm{kg}, & m_\text{Moon}=7.349\times 10^{22} \mathrm{kg},\\
    R_\text{Sun}=695700 \mathrm{km}, & R_\text{Earth}=6371.01 \mathrm{km}, & R_\text{Moon}=1737.53 \mathrm{km}.
\end{array}
\]}

%%%%%%%%%%%%%%%%%%%%%%%%%%%%%%%%%%%%%%%%%%%%%%%%%%%%%%%%%%%%%%%
\subsection{Implementation}\label{subsec:implementation}

In order approximate the solutions to our three-body problem, i.e. to approximate the trajectories of our three celestial bodies, we will be using computer software, namely, all subsequent computations are carried out using the computer algebra system Maple. Numerical quantities are evaluated in high precision arithmetic with \texttt{Digits := 10}, and all final reported results will be given in scientific notation with the mantissa rounded to five decimal places. 

We start by writing the system of differential equations \eqref{eq:motion_equations_3_body_problem}, setting the universal gravitational constant \eqref{eq:valoareG} and utilizing the initial conditions obtained in Section \ref{subsec:data_gathering}; and we obtain an approximation of the solution of this Cauchy problem using the command \texttt{dsolve} with the option \texttt{numeric} (so that it makes use of the Runge-Kutta-Fehlberg method (RKF45)) in about $0.03-0.04\mathrm{s}$ on a mid-tier computer from 2020.

With our approximation of the solution we can start plotting the simulated trajectories of the Sun-Earth-Moon System with a duration of 1 year, from the 1st of January 2026 (00:00 UTC) to the 31st of December 2026 (23:59 UTC). In Figure \ref{fig:2D_trajectories}, we can visualize their projections into the $Oxy$ plane (as their $z$-component varies insignificantly). 

In Figure \ref{fig:2D_earth_moon_sun} we have all simulated trajectories combined, namely, in blue we have Earth's trajectory, in red we have the Moon's trajectory and the central orange dot of the figure represents the trajectory of the Sun. 

Note that the Sun also has movement, although at the scale of this figure it is not obvious. Changing the scale, in Figure \ref{fig:2D_sun} this motion becomes visible. Finally, in Figure \ref{fig:2D_earth_moon}, one can visualize how the Moon and Earth exchange positions relative to the Sun.

\begin{figure}[htbp]
    \centering
    \begin{subfigure}[b]{0.32\textwidth}
        \centering
        \includegraphics[width=\textwidth]{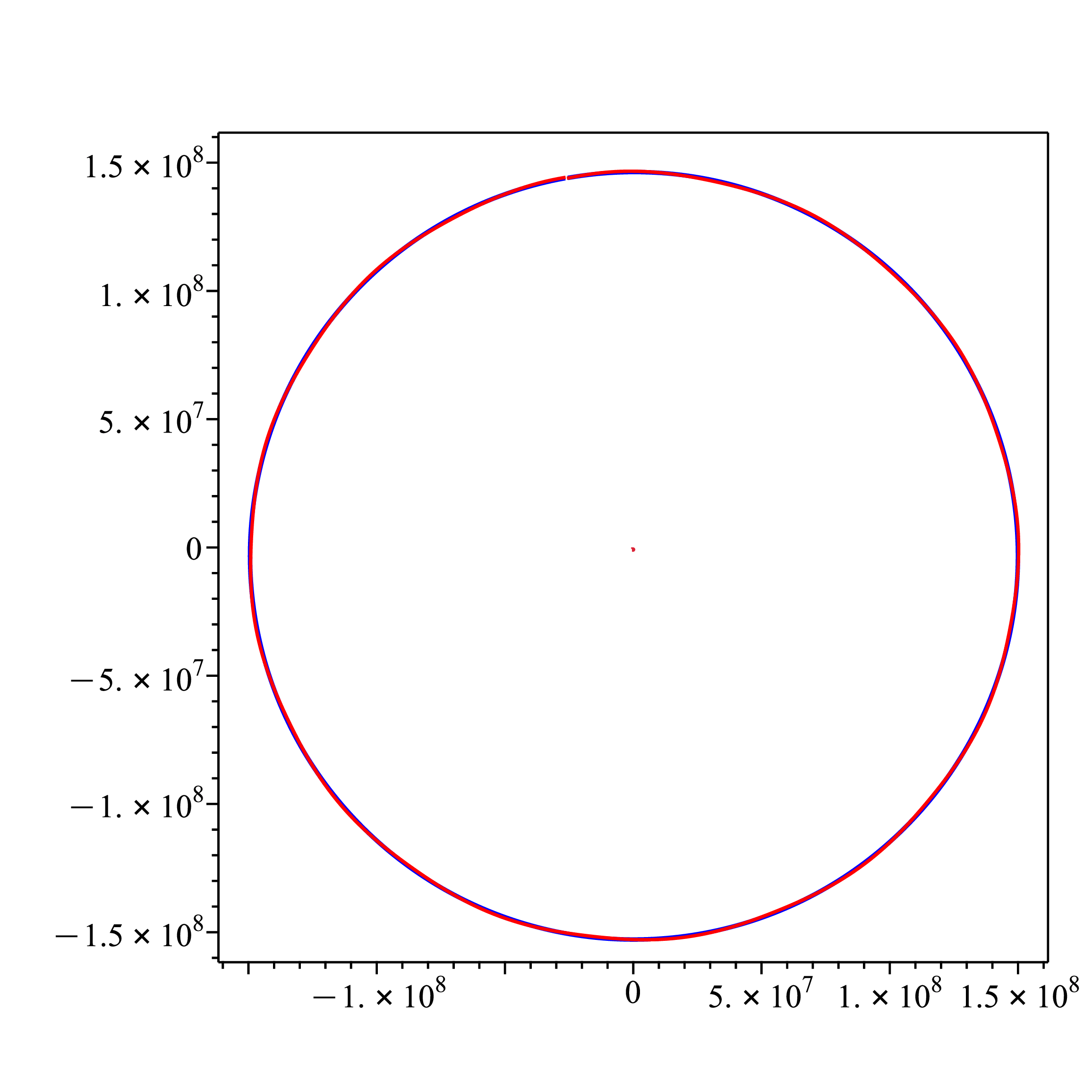}
        \caption{Sun, Earth and Moon}
        \label{fig:2D_earth_moon_sun}
    \end{subfigure}
    \begin{subfigure}[b]{0.32\textwidth}
        \centering
        \includegraphics[width=\textwidth]{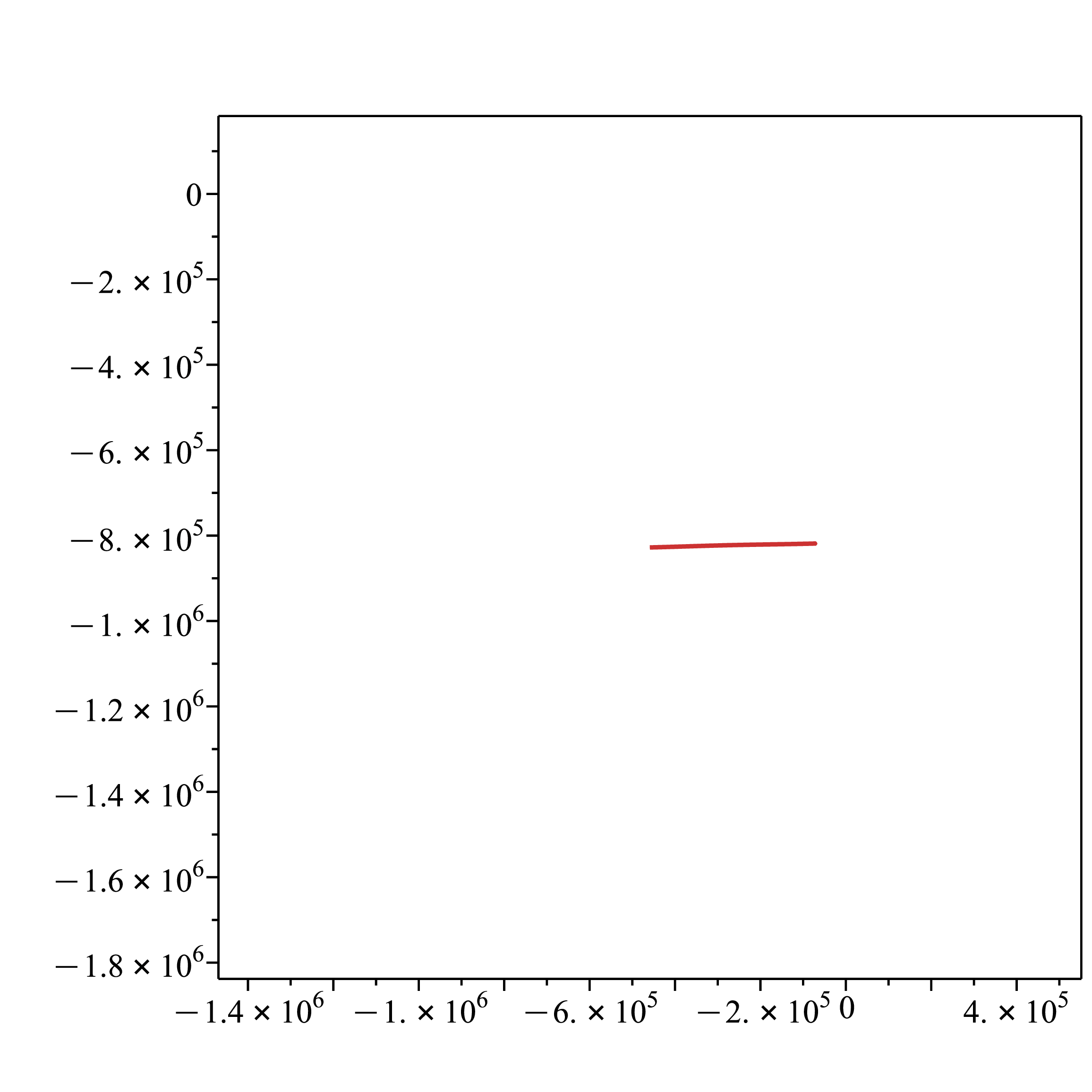}
        \caption{Sun}
        \label{fig:2D_sun}
    \end{subfigure}
    \begin{subfigure}[b]{0.32\textwidth}
        \centering
        \includegraphics[width=\textwidth]{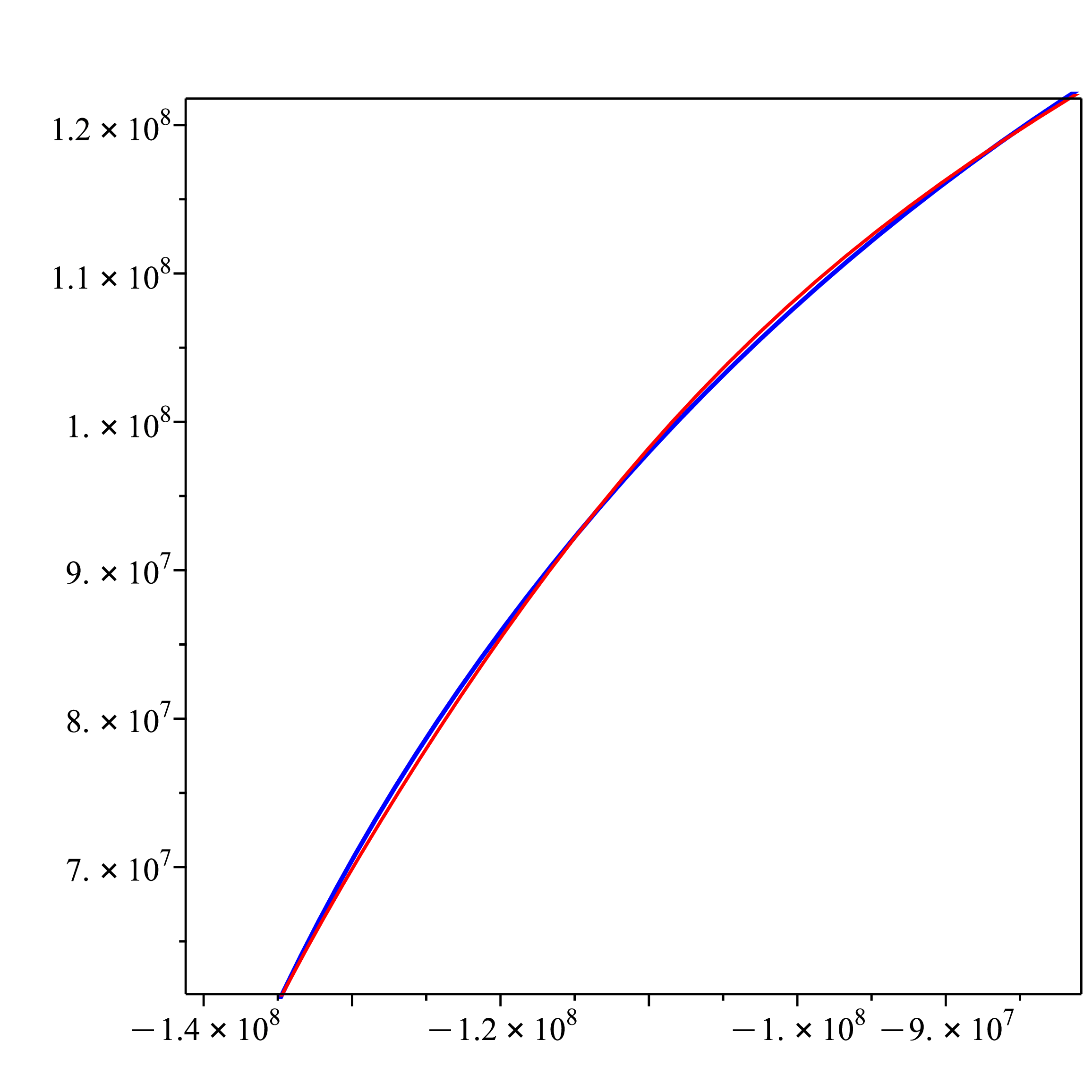}
        \caption{Earth and Moon}
        \label{fig:2D_earth_moon}
    \end{subfigure}
    \caption{Simulated 2D trajectories of the Sun-Earth-Moon System}
    \label{fig:2D_trajectories}
\end{figure}

On the same time scale, in Figure \ref{fig:3D_trajectories},  we can also visualize 3D renderings of the simulated trajectories of our three celestial bodies. Note that given the scale of the $Ox$ and $Oy$ axis, in order for the trajectories to not appear flat, we have changed the scale of the $Oz$-axis, and henceforth their motions are clearly happening in 3 dimensions. This being said, it is apparent that in our Cartesian frame, with coordinates provided by the NASA JPL HORIZONS On-Line Ephemeris System, the whole three-body system appears to be moving down in direction of the $Oz$-axis.

\begin{figure}[htbp]
    \centering
    \begin{subfigure}[b]{0.32\textwidth}
        \centering
        \includegraphics[width=\textwidth]{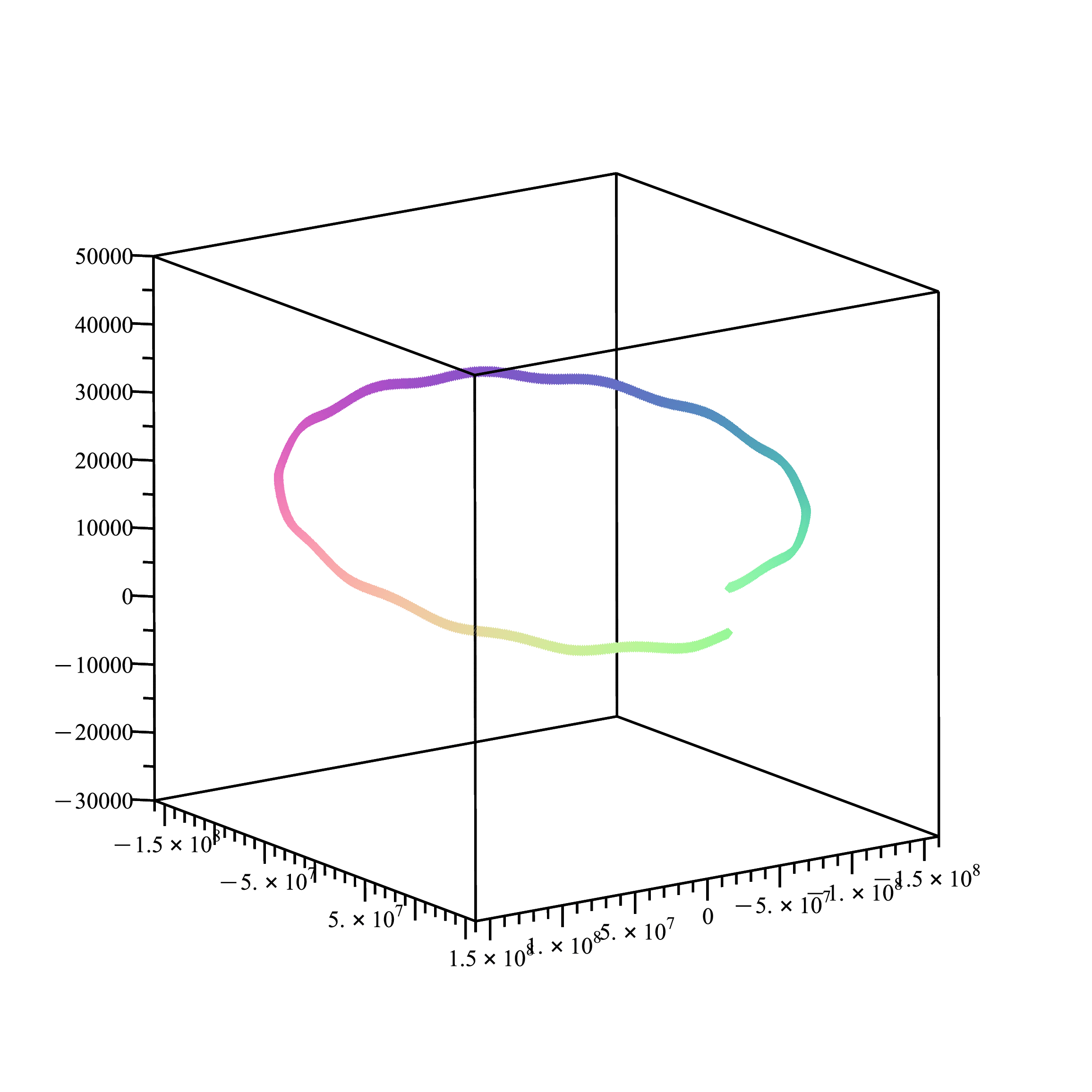}
        \caption{Earth}
        \label{fig:3D_earth}
    \end{subfigure}
    \begin{subfigure}[b]{0.32\textwidth}
        \centering
        \includegraphics[width=\textwidth]{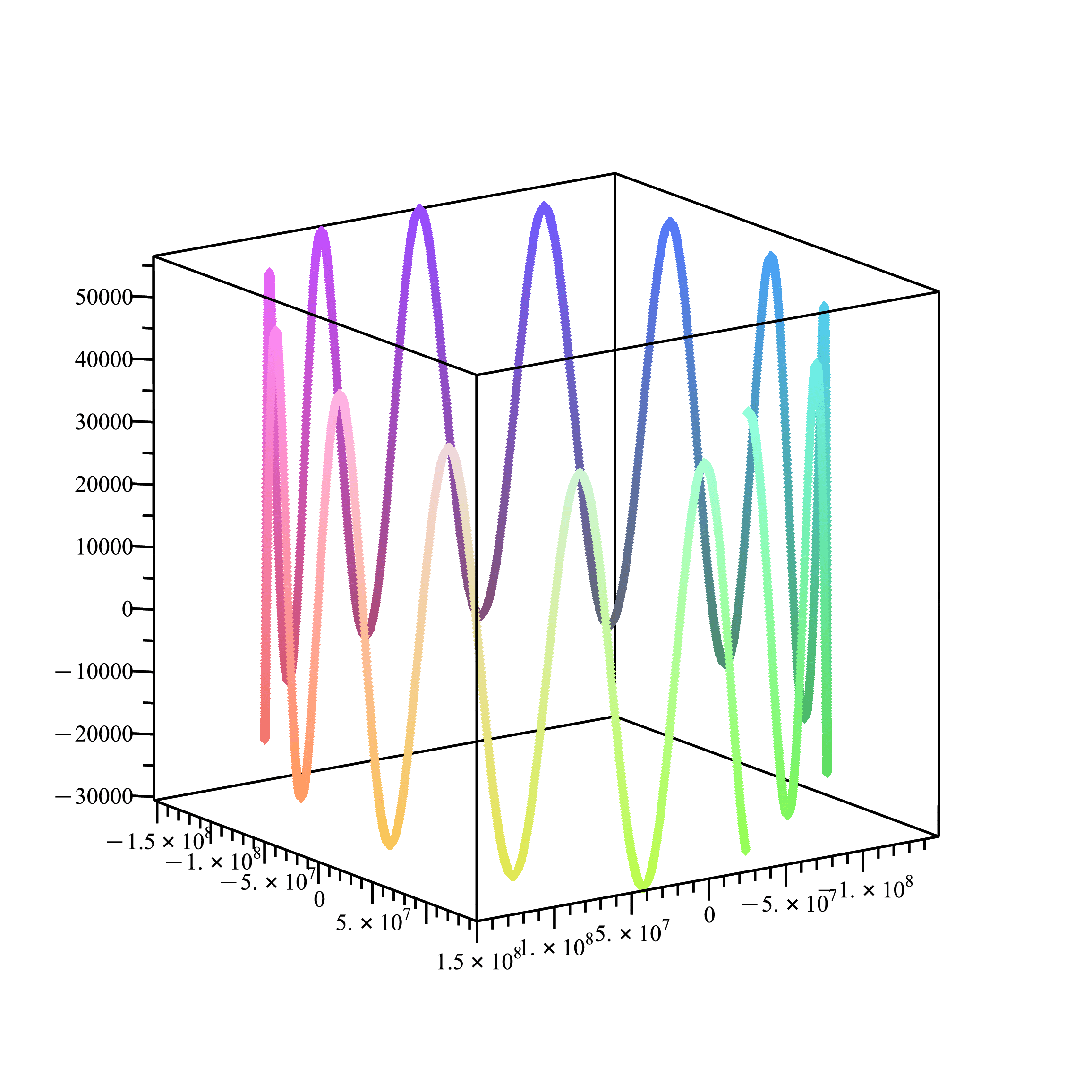}
        \caption{Moon}
        \label{fig:3D_moon}
    \end{subfigure}
    \begin{subfigure}[b]{0.32\textwidth}
        \centering
        \includegraphics[width=\textwidth]{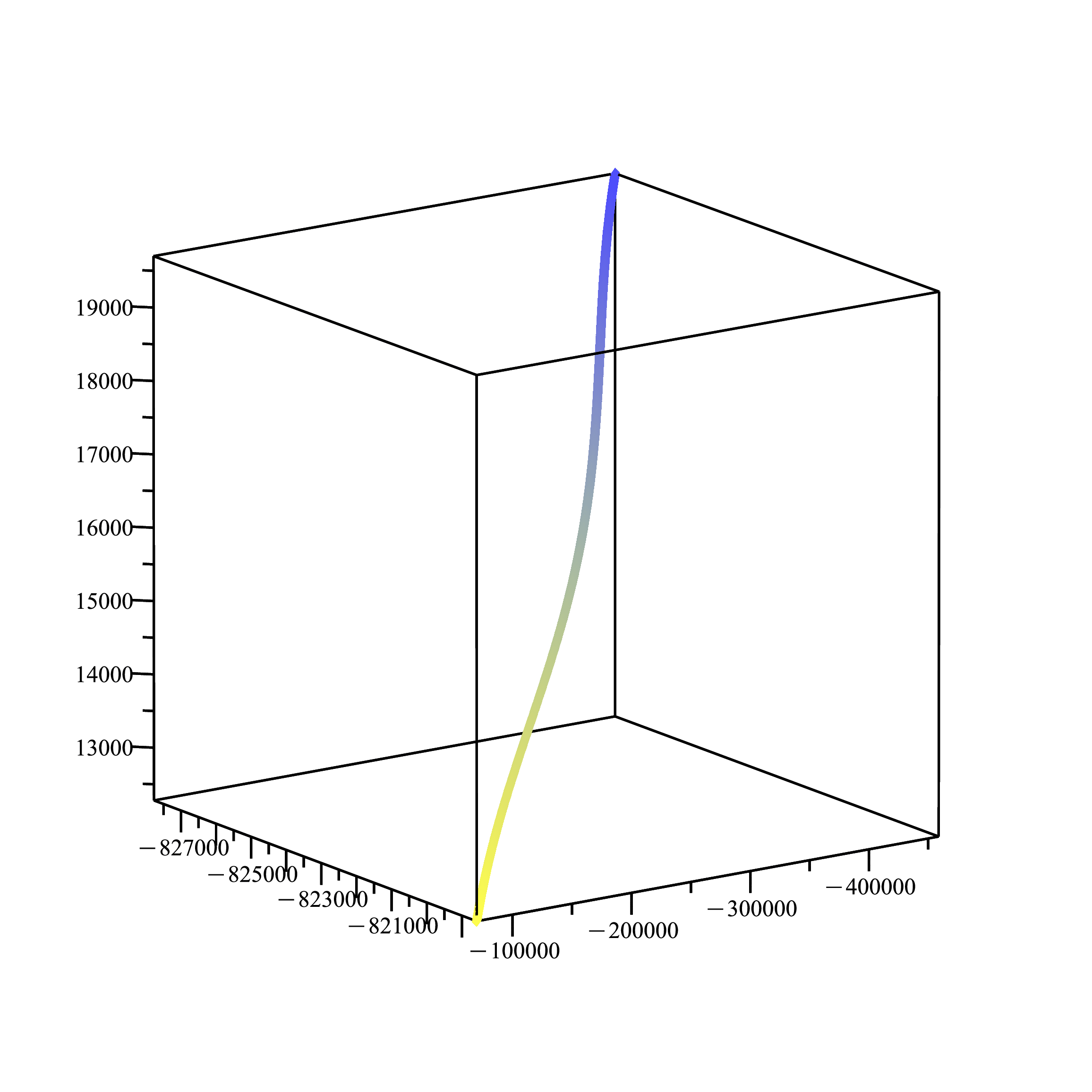}
        \caption{Sun}
        \label{fig:3D_sun}
    \end{subfigure}
    \caption{Simulated 3D trajectories of the Sun-Earth-Moon System}
    \label{fig:3D_trajectories}
\end{figure}

%%%%%%%%%%%%%%%%%%%%%%%%%%%%%%%%%%%%%%%%%%%%%%%%%%%%%%%%%%%%%%%
\subsection{Error analysis}\label{subsec:error}

In order to perform an error analysis, we have retrieved the estimated positions of the three celestial bodies in question, by querying the NASA JPL HORIZONS On-Line Ephemeris System \cite{JPL:HORIZONS}, with the same settings as discussed in Section \ref{subsec:data_gathering}, but for 1 year, 2 years, 5 years and respectively 10 years after our initial epoch $t_0 = \text{1 January 2026, 00:00 UTC}$, namely for the 1st of January 2027 (00:00 UTC), 1st of January 2028 (00:00 UTC), 1st of January 2031 (00:00 UTC) and 1st of January 2036 (00:00 UTC).

The values retrieved and the simulated positions of the three celestial bodies, together with the absolute errors of approximation (in both kilometers and astronomical units) are displayed in Table \ref{table:Earth_trajectory_errors}, Table \ref{table:Moon_trajectory_errors} and Table \ref{table:Sun_trajectory_errors} for Earth, the Moon and the Sun respectively. 

As we can see, the absolute errors keep increasing the further we get from the initial epoch $t_0$. This has multiple reasons:
\begin{enumerate}
    \item floating-point arithmetic error;
    \item approximation errors form data gathered in Section \ref{subsec:data_gathering}; 
    \item the estimation error of the RKF45 method;
    \item we do not take into account the other celestial bodies of the solar system;
    \item we do not take into account the general theory of relativity.
\end{enumerate}
While reason (4) is easily solvable by extending to the $n$-body problem and therefore taking into account all the known ``big'' celestial bodies for which we know initial positions and initial velocities at the common epoch $t_0$ (for which one may still easily use \cite{JPL:HORIZONS}), the other reasons would necessitate a complete change of the approach utilized. 

Despite these facts, for short-term simulations, in the range of $\pm10$ years, the discussed methodology appears to be sufficient to realize the goals of the current article, i.e. to predict solar eclipses on a short-time scale.

\begin{table}[H]
    \centering
    \renewcommand{\arraystretch}{1.3}
    \resizebox{0.8\textwidth}{!}{
    \begin{tabular}{||c|l|c|c||}
        \hline
        \hline
        \multirow[c]{2}{*}{Date (UTC)} & \multicolumn{1}{c|}{\multirow[c]{2}{*}{Coordinates (km)}} & \multicolumn{2}{c||}{Absolute error} \\
        \cline{3-4}
         &  & km & AU \\
        \hline
        \hline
        \multirow[c]{2}{*}{1 January 2026}
            & JPL: $(-2.65330\cdot 10^7,1.43946\cdot 10^8,1.08068\cdot 10^4)$ & \multirow[c]{2}{*}{$0$} & \multirow[c]{2}{*}{$0$}\\
        & Sim: $(-2.65330\cdot 10^7,1.43946\cdot 10^8,1.08068\cdot 10^4)$ & & \\
        \hline
        \multirow[c]{2}{*}{1 January 2027}
            & JPL: $(-2.55138\cdot 10^7,1.44166\cdot 10^8,3.85677\cdot 10^3)$ & \multirow[c]{2}{*}{$9.45737\cdot10^4$} & \multirow[c]{2}{*}{$0.00063$}\\
        & Sim: $(-2.54748\cdot 10^7,1.44080\cdot 10^8,4.14889\cdot 10^3)$ & & \\
        \hline
        \multirow[c]{2}{*}{1 January 2028}
            & JPL: $(-2.46334\cdot 10^7,1.44516\cdot 10^8,-2.27267\cdot 10^3)$ & \multirow[c]{2}{*}{$3.70591\cdot10^5$} & \multirow[c]{2}{*}{$0.00248$}\\
        & Sim: $(-2.44304\cdot 10^7,1.44206\cdot 10^8,-3.84546\cdot 10^3)$ & & \\
        \hline
        \multirow[c]{2}{*}{1 January 2031}
            & JPL: $(-2.54965\cdot 10^7,1.45092\cdot 10^8,-6.62711\cdot 10^2)$ & \multirow[c]{2}{*}{$1.90887\cdot10^6$} & \multirow[c]{2}{*}{$0.01276$}\\
        & Sim: $(-2.38495\cdot 10^7,1.44127\cdot 10^8,-2.62401\cdot 10^4)$ & & \\
        \hline
        \multirow[c]{2}{*}{1 January 2036}
            & JPL: $(-2.49862\cdot 10^7,1.43999\cdot 10^8,4.39021\cdot 10^2)$ & \multirow[c]{2}{*}{$3.83721\cdot10^6$} & \multirow[c]{2}{*}{$0.02565$}\\
        & Sim: $(-2.11618\cdot 10^7,1.44306\cdot 10^8,-6.30903\cdot 10^4)$ & & \\
        \hline
        \hline
    \end{tabular}}
    \caption{Absolute errors of the simulated trajectory of Earth for 10 years}
    \label{table:Earth_trajectory_errors}
\end{table}

\begin{table}[H]
    \centering
    \renewcommand{\arraystretch}{1.3}
    \resizebox{0.8\textwidth}{!}{
    \begin{tabular}{||c|l|c|c||}
        \hline
        \hline
        \multirow[c]{2}{*}{Date (UTC)} & \multicolumn{1}{c|}{\multirow[c]{2}{*}{Coordinates (km)}} & \multicolumn{2}{c||}{Absolute error} \\
        \cline{3-4}
         &  & km & AU \\
        \hline
        \hline
        \multirow[c]{2}{*}{1 January 2026}
            & JPL: $(-2.63888\cdot 10^7,1.44276\cdot 10^8,4.25602\cdot 10^4)$ & \multirow[c]{2}{*}{$0$} & \multirow[c]{2}{*}{$0$}\\
        & Sim: $(-2.63888\cdot 10^7,1.44276\cdot 10^8,4.25602\cdot 10^4)$ & & \\
        \hline
        \multirow[c]{2}{*}{1 January 2027}
            & JPL: $(-2.58697\cdot 10^7,1.44006\cdot 10^8,-2.76344\cdot 10^3)$ & \multirow[c]{2}{*}{$9.62038\cdot10^4$} & \multirow[c]{2}{*}{$0.00064$}\\
        & Sim: $(-2.58300\cdot 10^7,1.43918\cdot 10^8,-2.74075\cdot 10^4)$ & & \\
        \hline
        \multirow[c]{2}{*}{1 January 2028}
            & JPL: $(-2.43004\cdot 10^7,1.44285\cdot 10^8,1.24790\cdot 10^4)$ & \multirow[c]{2}{*}{$3.68605\cdot10^5$} & \multirow[c]{2}{*}{$0.00246$}\\
        & Sim: $(-2.40945\cdot 10^7,1.43979\cdot 10^8,1.13243\cdot 10^4)$ & & \\
        \hline
        \multirow[c]{2}{*}{1 January 2031}
            & JPL: $(-2.51183\cdot 10^7,1.45193\cdot 10^8,2.73435\cdot 10^4)$ & \multirow[c]{2}{*}{$1.88957\cdot10^6$} & \multirow[c]{2}{*}{$0.01263$}\\
        & Sim: $(-2.34779\cdot 10^7,1.44255\cdot 10^8,2.30141\cdot 10^2)$ & & \\
        \hline
        \multirow[c]{2}{*}{1 January 2036}
            & JPL: $(-2.47507\cdot 10^7,1.43668\cdot 10^8,1.35308\cdot 10^4)$ & \multirow[c]{2}{*}{$3.92144\cdot10^6$} & \multirow[c]{2}{*}{$0.02621$}\\
        & Sim: $(-2.08486\cdot 10^7,1.44051\cdot 10^8,-5.94879\cdot 10^4)$ & & \\
        \hline
        \hline
    \end{tabular}}
    \caption{Absolute errors of the simulated trajectory of the Moon for 10 years}
    \label{table:Moon_trajectory_errors}
\end{table}

\begin{table}[H]
    \centering
    \renewcommand{\arraystretch}{1.3}
    \resizebox{0.8\textwidth}{!}{
    \begin{tabular}{||c|l|c|c||}
        \hline
        \hline
        \multirow[c]{2}{*}{Date (UTC)} & \multicolumn{1}{c|}{\multirow[c]{2}{*}{Coordinates (km)}} & \multicolumn{2}{c||}{Absolute error} \\
        \cline{3-4}
         &  & km & AU \\
        \hline
        \hline
        \multirow[c]{2}{*}{1 January 2026}
            & JPL: $(-4.58863\cdot 10^5,-8.27774\cdot 10^5,1.96997\cdot 10^4)$ & \multirow[c]{2}{*}{$0$} & \multirow[c]{2}{*}{$0$}\\
        & Sim: $(-4.58863\cdot 10^5,-8.27774\cdot 10^5,1.96997\cdot 10^4)$ & & \\
        \hline
        \multirow[c]{2}{*}{1 January 2027}
            & JPL: $(-1.05636\cdot 10^5,-7.28935\cdot 10^5,1.24094\cdot 10^4)$ & \multirow[c]{2}{*}{$9.65127\cdot10^4$} & \multirow[c]{2}{*}{$0.00065$}\\
        & Sim: $(-6.98860\cdot 10^4,-8.18582\cdot 10^5,1.22799\cdot 10^4)$ & & \\
        \hline
        \multirow[c]{2}{*}{1 January 2028}
            & JPL: $(1.21374\cdot 10^5,-4.94673\cdot 10^5,7.04701\cdot 10^3)$ & \multirow[c]{2}{*}{$3.71677\cdot10^5$} & \multirow[c]{2}{*}{$0.00248$}\\
        & Sim: $(3.19091\cdot 10^5,-8.09390\cdot 10^5,4.86015\cdot 10^3)$ & & \\
        \hline
        \multirow[c]{2}{*}{1 January 2031}
            & JPL: $(-1.46526\cdot 10^5,1.91327\cdot 10^5,9.68541\cdot 10^3)$ & \multirow[c]{2}{*}{$1.90168\cdot10^6$} & \multirow[c]{2}{*}{$0.01271$}\\
        & Sim: $(1.48710\cdot 10^6,-7.81786\cdot 10^5,-1.74195\cdot 10^4)$ & & \\
        \hline
        \multirow[c]{2}{*}{1 January 2036}
            & JPL: $(-3.71614\cdot 10^5,-1.03700\cdot 10^6,1.25488\cdot 10^4)$ & \multirow[c]{2}{*}{$3.81716\cdot10^6$} & \multirow[c]{2}{*}{$0.02552$}\\
        & Sim: $(3.43305\cdot 10^6,-7.35798\cdot 10^5,-5.45386\cdot 10^4)$ & & \\
        \hline
        \hline
    \end{tabular}}
    \caption{Absolute errors of the simulated trajectory of the Sun for 10 years}
    \label{table:Sun_trajectory_errors}
\end{table}

%%%%%%%%%%%%%%%%%%%%%%%%%%%%%%%%%%%%%%%%%%%%%%%%%%%%%%%%%%%%%%%
%%%%%%%%%%%%%%%%%%%%%%%%%%%%%%%%%%%%%%%%%%%%%%%%%%%%%%%%%%%%%%%
%%%%%%%%%%%%%%%%%%%%%%%%%%%%%%%%%%%%%%%%%%%%%%%%%%%%%%%%%%%%%%%
\section{Predicting solar eclipses} \label{sec:results}

%%%%%%%%%%%%%%%%%%%%%%%%%%%%%%%%%%%%%%%%%%%%%%%%%%%%%%%%%%%%%%%
\subsection{Eclipse detection algorithm}
\label{subsec:eclipse_detection}

As in Section \ref{subsec:the_geometry_of_eclipses}, we consider a luminous spherical body $\Sun$ of center $S(x_S,y_S,z_S)$ and radius $R_S$; and an opaque spherical body $\M$ of center $M(x_M,y_M,z_M)$ and radius $R_M$ together with the assumption that $R_S>R_M$. Additionally, now, we also consider another spherical body $\mathcal{E}$ of center $E(x_E,y_E,z_E)$ and radius $R_E$.

In this section we want to obtain a criterion to determine  when does an observer residing on $\mathcal{E}$ can experience an eclipse of $\Sun$ by $\M$.

\begin{algorithm} \label{algo:main}
    \textbf{Step 1.}
    Given the parameters of three spherical bodies $\Sun$, $\M$ and $\mathcal{E}$, as above, via Remark \ref{remark:eclipse_parameters}, we now know the parameters of the three eclipse cones: the umbral cone (i.e., the circular cone $\U$), the antumbral cone (i.e., the convex circular cone $\A$) and the penumbral cone (i.e., the convex circular cone frustum $\P$).

    \medskip
    \textbf{Step 2.} We apply Algorithm \ref{algo:intersection_sphere_double_cone} to detect if there is an intersection between the sphere $\mathcal{E}$ and the umbral double cone $\mathcal{DU}$.
    \begin{itemize}[leftmargin=*]
        \item If there is a nonempty intersection between the two, we further refine by applying Algorithm \ref{algo:intersection_sphere_cone} to detect if $\mathcal{E}$ intersects the umbral cone $\U$, and Algorithm \ref{algo:intersection_sphere_convex_cone} to detect if $\mathcal{E}$ intersects the antumbral cone $\A$. If the intersection with either one of $\U$ and $\A$ is nonempty, then an eclipse of $\Sun$ by $\M$ can be observed from $\mathcal{E}$. In this case, by Remark \ref{remark:total_vs_annular_eclipse} we may further classify it into being either a total eclipse, either an annular eclipse. If both intersections are empty, then we move on to the next scenario (\textbf{Step 3}).
        \item If $\mathcal{E}$ and $\mathcal{DU}$ are disjoint we also move onto \textbf{Step 3}.
    \end{itemize}
    
    \medskip
    \textbf{Step 3.} We apply Algorithm \ref{algo:intersection_sphere_double_cone} to detect if there is an intersection between the sphere $\mathcal{E}$ and the penumbral double cone $\mathcal{DP}$.
    \begin{itemize}[leftmargin=*]
        \item If there is a nonempty intersection between the two, we apply Algorithm \ref{algo:intersection_sphere_frustum} to detect if $\mathcal{E}$ intersects the penumbral cone $\P$. If the intersection with $\P$ is nonempty, then a partial eclipse of $\Sun$ by $\M$ can be observed from $\mathcal{E}$. Otherwise, if $\mathcal{E}$ and $\P$ are disjoint, then there is no eclipse of $\Sun$ by $\M$ observable from $\mathcal{E}$.
        \item If $\mathcal{E}$ and $\mathcal{DP}$ are disjoint, then again there is no eclipse of $\Sun$ by $\M$ observable from $\mathcal{E}$.
    \end{itemize}
\end{algorithm}

\begin{remark} \label{rem:multiple_eclipses}
    Note that multiple types of eclipses of $\Sun$ by $\M$ can be experienced at the same time by observers from $\mathcal{E}$, although from different locations on $\mathcal{E}$. This being said, in the current form of Algorithm \ref{algo:main}, total eclipses and annular eclipses take precedence to partial eclipses. In other words, if \textbf{Step 2} of Algorithm \ref{algo:main} detects a total eclipse, or an annular eclipse, even if somewhere on $\mathcal{E}$ a partial eclipse can be experienced, \textbf{Step 3} should not be verified as it can give a false-positive because of the total/annular eclipse. The reason for this lies in the fact that the penumbral cone includes in its interior both the umbral cone and the antumbral one. 
 \end{remark}

%%%%%%%%%%%%%%%%%%%%%%%%%%%%%%%%%%%%%%%%%%%%%%%%%%%%%%%%%%%%%%%
\subsection{Implementation and error analysis} \label{subsec:implentation_and_errors}

Using the computer algebra system Maple (with the same configuration as in Section \ref{subsec:implementation}), we apply Algorithm \ref{algo:main} on the simulated positions of the three celestial bodies of the Sun-Earth-Moon System obtained in Section \ref{subsec:implementation}, over a duration of 1 year, from the 1st of January 2026 (00:00 UTC) to the 31st of December 2026  (23:59 UTC), checking the algorithm minute by minute.

In Table \ref{table:17_febr_2026} and Table  \ref{table:12_aug_2026}, we report the detected solar eclipses, their type (i.e., total eclipse, annular eclipse or partial eclipse), their starting and finishing time, together with the  values reported by NASA \cite[pp. 230--231]{nasa} and an estimation of the absolute errors of approximation. Note beforehand that the errors seen are consequences of the errors of the trajectory simulations (see Section \ref{subsec:error}) and not of Algorithm \ref{algo:main}, as the proposed algorithm, being mostly geometric in nature (see Section \ref{sec:geometry}), has negligible errors. 

\begin{table}[H]
    \centering
    \renewcommand{\arraystretch}{1.3}
    \resizebox{1\textwidth}{!}{
    \begin{tabular}{||l||c|c|c|c||}
        \hline
        \hline
        & Start of Partial Eclipse & Start of Annular Eclipse & End of Annular Eclipse & End of Partial Eclipse\\
        \hline
        \hline
        Simulated time (UTC)&09:56&11:41&12:40&14:26\\
        \hline
        NASA time (UTC)&09:56&11:42&12:41&14:27\\
        \hline
        Absolute error (min.)&0&1&1&1\\
        \hline
        \hline
    \end{tabular}}
    \caption{The annular solar eclipse of 17th February 2026}
    \label{table:17_febr_2026}
\end{table}

\begin{table}[H]
    \centering
    \renewcommand{\arraystretch}{1.3}
    \resizebox{1\textwidth}{!}{
    \begin{tabular}{||l||c|c|c|c||}
        \hline
        \hline
        & Start of Partial Eclipse & Start of Total Eclipse & End of Total Eclipse & End of Partial Eclipse\\
        \hline
        \hline
        Simulated time (UTC)&15:22&16:45&18:21&19:44\\
        \hline
        NASA time (UTC)&15:33&16:57&18:33&19:57\\
        \hline
        Absolute error (min.)&11&12&12&13\\
        \hline
        \hline
    \end{tabular}}
    \caption{The total solar eclipse of 12th  August 2026}
    \label{table:12_aug_2026}
\end{table}

%%%%%%%%%%%%%%%%%%%%%%%%%%%%%%%%%%%%%%%%%%%%%%%%%%%%%%%%%%%%%%%
%%%%%%%%%%%%%%%%%%%%%%%References%%%%%%%%%%%%%%%%%%%%%%%%%%%%%%
%%%%%%%%%%%%%%%%%%%%%%%%%%%%%%%%%%%%%%%%%%%%%%%%%%%%%%%%%%%%%%%
\phantomsection

\end{document}